\definecolor{codegreen}{rgb}{0,0.6,0}
\definecolor{codegray}{rgb}{0.5,0.5,0.5}
\definecolor{codepurple}{rgb}{0.58,0,0.82}
\definecolor{backcolour}{rgb}{0.95,0.95,0.92}
\lstdefinestyle{mystyle}{
  backgroundcolor=\color{backcolour},   commentstyle=\color{codegreen},
  keywordstyle=\color{magenta},
  numberstyle=\tiny\color{codegray},
  stringstyle=\color{codepurple},
  basicstyle=\ttfamily\footnotesize,
  breakatwhitespace=false,         
  breaklines=true,                 
  captionpos=b,                    
  keepspaces=true,                 
  numbers=left,                    
  numbersep=5pt,                  
  showspaces=false,                
  showstringspaces=false,
  showtabs=false,                  
  tabsize=2
}
\numberwithin{equation}{section}		
\numberwithin{figure}{section}			
\numberwithin{table}{section}				
\begin{document}
\title{Recursive and iterative approaches to generate rotation Gray codes for stamp foldings and semi-meanders}
%
%
\author{Bowie~Liu \and Dennis~Wong \and Chan-Tong Lam \and Marcus Im}
\authorrunning{B. Liu and D. Wong}
%
\institute{Macao Polytechnic University, Macao, China \\
\email{\{bowen.liu, cwong, ctlam, marcusim\}@mpu.edu.mo}}
\maketitle              
\begin{abstract}
We first present a simple recursive algorithm that generates cyclic rotation Gray codes for stamp foldings and semi-meanders, where consecutive strings differ by a stamp rotation. 
These are the first known Gray codes for stamp foldings and semi-meanders, and we thus solve an open problem posted by Sawada and Li in [Electron. J. Comb. 19(2), 2012]. 
We then introduce an iterative algorithm that generates the same rotation Gray codes for stamp foldings and semi-meanders.
Both the recursive and iterative algorithms generate stamp foldings and semi-meanders in constant amortized time and $O(n)$-amortized time per string respectively, using a linear amount of memory.

\keywords{Stamp foldings  \and Meanders \and Semi-meanders \and Reflectable language \and Binary reflected Gray code \and Gray code \and CAT algorithm.}
\end{abstract}
\section{Introduction}
~\label{sec:intro}

A \emph{stamp folding} is a way to fold a linear strip of $n$ stamps into a single pile, with the assumption that the perforations between the stamps are infinitely elastic. 
As an example, Figure~\ref{fig:stamp} illustrates the sixteen stamp foldings for $n = 4$. 
We always orient a pile of stamps horizontally, with the stamps facing left and right and the perforations facing up and down. 
In addition, the perforation between stamp $1$ and stamp $2$ is located at the bottom.
The sixteen piles of stamps for $n=4$ can thus be obtained by contracting the horizontal lines in Figure~\ref{fig:stamp}.
Each stamp folding can be represented by a unique permutation $(\pi(1) \pi(2) \cdots \pi(n))$, where $\pi(i)$ is the stamp at position $i$ in the pile when considering it from left to right.
The permutations that correspond to the sixteen stamp foldings in Figure~\ref{fig:stamp} for $n=4$ are as follows: 
\begin{center}
1234, 1243, 1342, 1432, 2134, 2143, 2341, 2431, \\
3124, 3214, 3412, 3421, 4123, 4213, 4312, 4321. 
\end{center}
In the rest of this paper, we use this permutation representation to represent stamp foldings. 
An alternative permutation representation can be found in~\cite{HOFFMANN1986170}.
Note that not all permutations correspond to a valid stamp folding. 
For example, the permutation $1423$ requires the strip of stamps to intersect itself and is not a valid stamp folding.

\begin{figure}[t]
\begin{center}
\includegraphics[width=1\columnwidth]{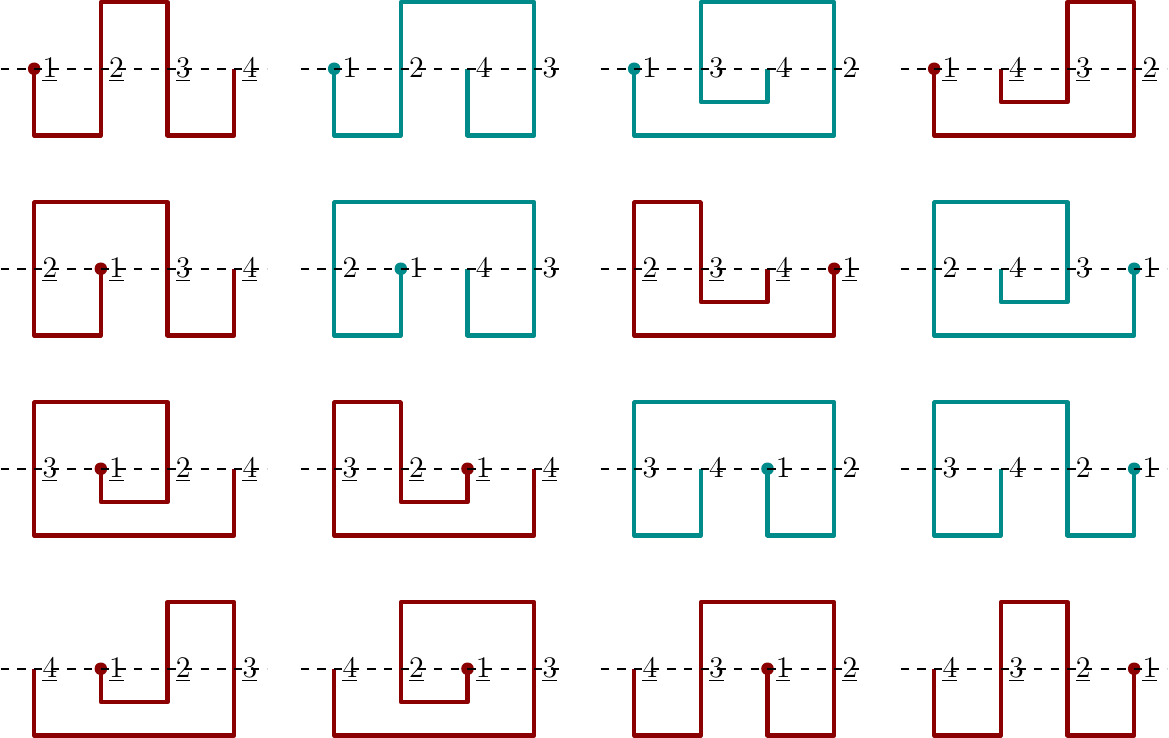}
\end{center}
\caption{Stamp foldings of order four. The stamp with a dot is stamp $1$. Stamp foldings that are underlined (in red color) are semi-meanders. }
\label{fig:stamp}
\end{figure}

A \emph{semi-meander} is a stamp folding with the restriction that stamp $n$ can be seen from above when $n$ is even, and stamp $n$ can be seen from below when $n$ is odd.
In other words, stamp $n$ is not obstructed by perforations between other stamps and remains visible from either above or below, depending on its parity.
For example, $1234$ is a semi-meander, while $2143$ is not a semi-meander since stamp $4$ is blocked by the perforation between stamp $2$ and stamp $3$ and cannot be seen from above. 
The permutations that correspond to the ten  semi-meanders in Figure~\ref{fig:stamp} (underlined and in red color) for $n=4$ are given as follows: 
\begin{center}
1234, 1432, 2134, 2341, 3124, 3214, 4123, 4213, 4312, 4321. 
\end{center}

The study of stamp foldings and related combinatorial objects has a long history and has traditionally attracted considerable interest from mathematicians~\cite{ 10.1145/1721837.1721858, combos, france_van_der_poorten_1981, Iwan_Jensen_2000, KOEHLER1968135, DBLP:journals/ajc/Legendre14, lucas_number, Lunnon1968AMP, lague_stamp, DBLP:journals/combinatorics/SawadaL12, doi:10.1137/1016111, touchard_1950}. 
For example, the five foldings of four blank stamps appear at the front cover of the book \emph{A Handbook of Integer Sequences} by Sloane~\cite{doi:10.1137/1016111}, which is the ancestor of the well-known \emph{Online Encyclopedia of Integer Sequences}~\cite{Sloane2010OESI}. 
Lucas~\cite{lucas_number} first posed the enumeration problem for stamp foldings in 1891, questioning the number of ways a strip of $n$ stamps could be folded. 
See~\cite{DBLP:journals/ajc/Legendre14} for a brief history of the development of the enumeration problem of stamp foldings.
Stamp foldings and related combinatorial objects have lots of applications, ranging from robot coverage path planning~\cite{9561433} and conditioned random walks~\cite{Iordache2017}, to protein folding~\cite{schweitzer2012protein}. 

The enumeration sequences for stamp foldings and semi-meanders are A000136 and A000682 in the Online Encyclopedia of Integer Sequences, respectively~\cite{Sloane2010OESI}.
The first ten terms for the enumeration sequences of stamp foldings and semi-meanders are as follows: 
\begin{itemize}
\item Stamp foldings: 1, 2, 6, 16, 50, 144, 462, 1392, 4536, and 14060;
\item Semi-meanders: 1, 2, 4, 10, 24, 66, 174, 504, 1406, and 4210. 
\end{itemize}
Although a large number of terms of the folding sequences have been computed, no closed formula has been found for both enumeration sequences.

One of the most important aspects of combinatorial generation is to list the instances of a combinatorial object so that consecutive instances differ by a specified \emph{closeness condition} involving a constant amount of change. 
Lists of this type are called \emph{Gray codes}. 
This terminology is due to the eponymous \emph{binary reflected Gray code} (BRGC) by Frank Gray, which orders the $2^n$ binary strings of length $n$ so that consecutive strings differ by one bit. 
For example, when $n = 4$ the order is
\begin{center} 
0000, 1000, 1100, 0100, 0110, 1110, 1010, 0010, \\ 0011, 1011, 1111, 0111, 0101, 1101, 1001, 0001.
\end{center}
We note that the order above is \emph{cyclic} because the last and first strings also differ by the closeness condition, and this property holds for all $n$.
The BRGC listing is a 1-Gray code in which consecutive strings differ by one symbol change. 
In this paper, we focus on \emph{rotation Gray code} in which consecutive strings differ by a stamp rotation.

An interesting related problem is thus to discover Gray codes for stamp foldings and semi-meanders.
There are several algorithms to generate stamp foldings and semi-meanders. 
Lunnon~\cite{Lunnon1968AMP} in 1968 provided a backtracking algorithm that exhaustively generates stamp foldings after considering rotational equivalence and equivalence of the content between the first crease and the second crease.
More recently, Sawada and Li~\cite{DBLP:journals/combinatorics/SawadaL12} provided an efficient algorithm that exhaustively generates stamp foldings and semi-meanders in constant amortized time per string. 
However, the listings produced by these algorithms are not Gray codes. 
The problem of finding Gray codes for stamp foldings and semi-meanders is listed as an open problem in the paper by Sawada and Li~\cite{DBLP:journals/combinatorics/SawadaL12}, and also in the Gray code survey by M\"{u}tze~\cite{mutze2023combinatorial}. 
In this paper, we solve this open problem by providing the first known {rotation Gray codes} for stamp foldings and semi-meanders, where consecutive strings differ by a stamp rotation. 
The formal definition of stamp rotation is provided in Section~\ref{sec:algorithm}. 
The algorithms generate stamp foldings in constant amortized time per string, and semi-meanders in $O(n)$-amortized time per string {respectively}, 
using a linear amount of memory. 

A preliminary version of this paper appeared at the IWOCA 2023 conference~\cite{10.1007/978-3-031-34347-6_23}.

The rest of the paper is outlined as follows.
In Section~\ref{sec:algorithm}, we describe a simple recursive algorithm to generate cyclic rotation Gray codes for stamp foldings and semi-meanders.
Then in Section~\ref{sec:proves}, we prove the Gray code property and show that the recursive algorithm generates each stamp folding and semi-meander in constant amortized time and $O(n)$-amortized time respectively, using a linear amount of memory.
In Section~\ref{sec:iterative}, we describe a simple iterative algorithm which generates the same cyclic rotation Gray codes for stamp foldings and semi-meanders as the recursive algorithm. 
We also prove that the iterative algorithm generates each stamp folding and semi-meander in constant amortized time and $O(n)$ time respectively. 
Finally, we conclude our paper in Section~\ref{sec:remarks}.

\section{A recursive algorithm to generate Gray codes for semi-meanders and stamp foldings}
\label{sec:algorithm}
In this section, we first describe a simple recursive algorithm to generate a cyclic rotation Gray code for semi-meanders. 
We then describe simple modifications to the algorithm to generate a cyclic rotation Gray code for stamp foldings. 

\begin{figure}[t]
\begin{center}
\includegraphics[width=1\columnwidth]{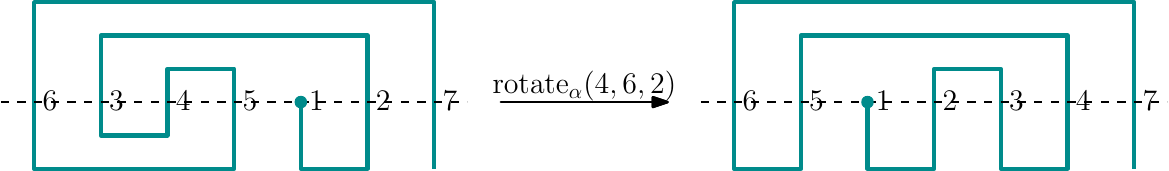}
\end{center}
\caption{Stamp rotation. The stamp folding $6512347$ can be obtained by applying a stamp rotation ${\textrm{rotate}}_\alpha(4, 6, 2)$ on $\alpha = 6345127$.}
\label{fig:mrep_4bits}
\end{figure}

Consider a stamp folding $\alpha$ = $p_1 p_2 \cdots p_n$.
A \emph{stamp rotation} of the $i$-th to $j$-th symbols of a stamp folding $\alpha$ into its $k$-th position with $k < i \leq j \leq n$, denoted by ${\textrm{rotate}}_\alpha(i, j, k)$, is the string
$p_1 p_2 \cdots p_{k-1} p_{i} p_{i+1} \cdots p_j p_k p_{k+1} \cdots p_{i-1} p_{j+1} p_{j+2} \cdots p_n$. 
As an example, given a stamp folding (semi-meander) $\alpha = 6345127$, 
Figure~\ref{fig:mrep_4bits} illustrates the stamp folding (semi-meander) $6512347$ obtained by applying ${\textrm{rotate}}_\alpha(4, 6, 2)$ on $\alpha$ (illustrated as $6\overleftarrow{34\underline{512}}7 \xrightarrow{} 6512347$, or equivalently $6\overrightarrow{\underline{34}{512}}7 \xrightarrow{} 6512347$).
Note that not all stamp rotations with $k < i \leq j \leq n$ can generate a valid stamp folding.
To simplify the notation, we define two special stamp rotations. 
A \emph{left rotation} of the suffix starting at position $i$ of a stamp folding $\alpha$, denoted by $\overleftarrow{\textrm{rotate}}_\alpha(i)$, is the string obtained by rotating the suffix $p_i p_{i+1} \cdots p_n$ to the front of $\alpha$, that is $\overleftarrow{\textrm{rotate}}_\alpha(i) = {\textrm{rotate}}_\alpha(i, n, 1) = p_i p_{i+1} \cdots p_n p_1 p_2 \cdots p_{i-1}$.
Similarly, a \emph{right rotation} of the prefix ending at position $j$ of a stamp folding $\alpha$, denoted by $\overrightarrow{\textrm{rotate}}_\alpha(j)$, is the string obtained by rotating the prefix $p_1 p_2 \cdots p_j$ to the end of $\alpha$, that is $\overrightarrow{\textrm{rotate}}_\alpha(j) = {\textrm{rotate}}_\alpha(j+1, n, 1) = p_{j+1} p_{j+2} \cdots p_n p_1 p_2 \cdots p_j$. 
When the context is clear, we use $\overleftarrow{\textrm{rotate}}(i)$ and $\overrightarrow{\textrm{rotate}}(j)$ to indicate $\overleftarrow{\textrm{rotate}}_\alpha(i)$ and $\overrightarrow{\textrm{rotate}}_\alpha(j)$ respectively. 
Observe that $\overleftarrow{\textrm{rotate}}_\alpha(t+1) = \overrightarrow{\textrm{rotate}}_\alpha(t)$. 
We also use the terms \emph{simple left rotation} and \emph{simple right rotation} to refer to $\overleftarrow{\textrm{rotate}}(n)$ and $\overrightarrow{\textrm{rotate}}(1)$ respectively.  

A \emph{string rotation} of a string $\alpha = p_1 p_2 \cdots p_n$ is the string $p_2 p_3 \cdots p_n p_1$ which is obtained by taking the first character $p_1$ of $\alpha$ and placing it in the last position.
The set of stamp foldings that are equivalent to a stamp folding $\alpha$ under string rotation is denoted by $\mathrm{Rots}(\alpha)$. 
For example, $\mathrm{Rots}(1243) = \{1243, 2431, 4312, 3124\}$.
Note that $|\mathrm{Rots}(\alpha)| = n$. 
The strings in $\mathrm{Rots}(\alpha)$ can be obtained by applying left rotation $\overleftarrow{\textrm{rotate}}_\alpha(i)$ on $\alpha$ for all integers $1 \leq i \leq n$, or applying right rotation $\overrightarrow{\textrm{rotate}}_\alpha(j)$ on $\alpha$ for all integers $1 \leq j \leq n$. 
We also define $I(e, \alpha)$ as the index of an element $e$ within a string $\alpha$. 
For example, $I(p_2, \alpha) = 2$ and $I(5, 6512347) = 2$.

\begin{lemma}~\cite{lague_stamp} \label{lem:rotation}
If $\alpha = p_1 p_2 \cdots p_n$ is a stamp folding, then $\beta \in \mathrm{Rots}(\alpha)$ is also a stamp folding.
\end{lemma}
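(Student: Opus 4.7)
The plan is to encode a stamp folding $\alpha = p_1 p_2 \cdots p_n$ as an arc diagram: place $n$ points at positions $1, 2, \ldots, n$ on a horizontal line, and for each $k \in \{1, 2, \ldots, n-1\}$ draw an arc between $I(k, \alpha)$ and $I(k{+}1, \alpha)$, placing it below the line when $k$ is odd and above when $k$ is even, respecting the convention that the perforation between stamps $1$ and $2$ sits at the bottom. With this representation $\alpha$ is a valid stamp folding if and only if no two arcs on the same side cross, a standard characterisation. Since $\mathrm{Rots}(\alpha)$ is generated by iterating the single string rotation $\alpha \mapsto p_2 p_3 \cdots p_n p_1$, it suffices to verify that this single rotation preserves the non-crossing property on each side; the full lemma then follows by induction on the number of rotations applied.

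Under one such rotation every stamp $p_i$ with $i \geq 2$ moves from position $i$ to position $i-1$, while $p_1$ moves from position $1$ to position $n$. Any arc whose two endpoints already lie in $\{2, \ldots, n\}$ is therefore shifted uniformly by $-1$, which preserves the relative order of all endpoints and hence preserves nesting versus disjointness with every other such arc. The only arcs requiring attention are the at most two arcs incident to position $1$, namely the arc from $p_1$ to stamp $p_1{-}1$ (if it exists) and the arc from $p_1$ to stamp $p_1{+}1$ (if it exists). Because consecutive perforations alternate between top and bottom by the stated convention, these two arcs lie on opposite sides, so each side contains at most one of them.

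The crux is the behaviour of such a distinguished arc, which transforms from $(1, s)$ in $\alpha$ to $(s-1, n)$ in $\beta$, where $s \geq 2$ is the position of the neighbouring stamp. Any other arc $(a, b)$ on the same side with $2 \leq a < b \leq n$ must, by non-crossing in $\alpha$, satisfy either $b < s$ (so $(a, b)$ is nested inside $(1, s)$) or $a > s$ (so $(a, b)$ lies disjointly to the right of $(1, s)$). A short direct check shows that under the transformation these two situations become respectively "disjoint and to the left of $(s-1, n)$" and "nested inside $(s-1, n)$" in $\beta$, neither of which crosses $(s-1, n)$. Combined with the monotone-shift argument for the remaining arcs, this shows $\beta$ is a valid stamp folding, and iterating delivers the lemma for every element of $\mathrm{Rots}(\alpha)$.

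The main obstacle I expect is precisely this case analysis: although geometrically very transparent (the "one endpoint at $1$" swings around to become an endpoint at $n$, which converts nested configurations to disjoint ones and vice versa), writing it out requires care to treat all parities of $p_1$, both sides of the arc diagram, and the boundary cases $p_1 = 1$ and $p_1 = n$ where one of the two candidate arcs is absent.
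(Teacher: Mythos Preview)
The paper does not prove this lemma at all: it is stated with a citation to Sainte-Lagu\"e~\cite{lague_stamp} and used as a black box, so there is no in-paper argument to compare against.

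Your arc-diagram argument is correct and is essentially the standard way to see this fact. The key reductions---(i) it suffices to handle a single string rotation, (ii) arcs not touching position $1$ shift rigidly by $-1$, and (iii) on each side there is at most one arc incident to position $1$---are all sound, and your case split for the distinguished arc $(1,s)\mapsto(s-1,n)$ is exhaustive because on a fixed side no other arc can share the endpoint $s$ (each position carries at most one arc per side). One small point worth making explicit in a final write-up is that the side (top/bottom) of every perforation is determined solely by the stamp label $k$ of the pair $\{k,k{+}1\}$ and is therefore unchanged by permuting positions; you rely on this implicitly when you analyse each side separately, so stating it removes any doubt about whether rotation could swap sides.
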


Lemma~\ref{lem:rotation} implies that the set of stamp foldings is partitioned into equivalence classes under string rotation. 
Also, note that this property does not necessarily hold for semi-meanders. For example, the string $3124$ is a semi-meander but $1243 \in \mathrm{Rots}(3124)$ is not a semi-meander. 

The following lemmas follow from the definition of semi-meander. 

\begin{lemma} \label{lem:submeander}
The string $\alpha = p_1 p_2 \cdots p_n$ with $p_1 = n$ is a stamp folding of order $n$ if and only if $p_2 p_3 \cdots p_n$ is a semi-meander of order $n-1$. 

\end{lemma}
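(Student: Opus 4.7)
The plan is to exploit the geometric description of a stamp folding together with the parity-dependent location of the perforation between stamps $n-1$ and $n$. Since the perforation between stamps $1$ and $2$ lies on the bottom and adjacent perforations alternate, the perforation between stamps $n-1$ and $n$ lies on the bottom when $n$ is even and on the top when $n$ is odd. Since $p_1 = n$, stamp $n$ sits at the leftmost position of the pile and its single perforation must sweep from position $1$ across the stamps in positions $2,3,\ldots,I(n-1,\alpha)-1$ in order to reach stamp $n-1$. A key preliminary observation is that removing the leftmost stamp from a valid stamp folding of order $n$ yields a valid stamp folding of order $n-1$, since no new crossing can be introduced by deletion.

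For the forward direction, I would assume $\alpha = n\,p_2\cdots p_n$ is a stamp folding of order $n$ and deduce first, by the deletion observation, that $p_2 p_3 \cdots p_n$ is a stamp folding of order $n-1$. It then remains to verify that stamp $n-1$ is visible from the appropriate side. When $n$ is even, the perforation from stamp $n$ to stamp $n-1$ lies on the bottom; the non-crossing requirement for $\alpha$ forces every bottom perforation of $p_2\cdots p_n$ located strictly between position $2$ and the position of stamp $n-1$ to be absent, which is precisely the statement that stamp $n-1$ is visible from below in $p_2 \cdots p_n$. Since $n-1$ is odd, this is exactly the semi-meander condition at order $n-1$. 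The case $n$ odd is entirely symmetric with top replacing bottom.

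For the backward direction, I would start from a semi-meander $p_2\cdots p_n$ of order $n-1$, which is by definition a valid stamp folding of order $n-1$ in which stamp $n-1$ is visible from the side dictated by the parity of $n-1$. Prepending stamp $n$ at the leftmost position creates exactly one new perforation, namely the perforation between stamps $n-1$ and $n$, which lies on the side opposite to the perforation between $n-2$ and $n-1$. The parity bookkeeping of the first paragraph shows that this new perforation sits precisely on the side from which stamp $n-1$ is already visible, so it crosses no existing perforation; hence $\alpha = n\,p_2\cdots p_n$ is a valid stamp folding of order $n$.

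The main obstacle is not mathematical but notational: making the geometric notions of \emph{perforation location}, \emph{visibility from above/below}, and \emph{non-crossing} precise enough that the parity case analysis above reduces to a one-line verification for each direction. Once the convention on perforation placement inherited from Section~\ref{sec:intro} is made explicit, the equivalence follows immediately from matching the parity-determined side of the new perforation against the parity-determined visibility side in the definition of a semi-meander.
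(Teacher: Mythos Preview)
Your proposal is correct and follows essentially the same geometric argument as the paper: both hinge on the observation that the perforation joining stamp $n$ (at the extreme position) to stamp $n-1$ can be drawn without crossings precisely when stamp $n-1$ is unobstructed on the parity-appropriate side, i.e., when $p_2\cdots p_n$ is a semi-meander. The only differences are cosmetic---the paper argues the forward direction by contrapositive and dismisses the backward direction as trivial, whereas you give direct arguments for both and make the parity bookkeeping explicit.
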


\begin{proof}
The backward direction is trivial. 
For the forward direction, assume by contrapositive that $p_2 p_3 \cdots p_n$ is not a semi-meander. 
If $p_2 p_3 \cdots p_n$ is not even a stamp folding of order $n-1$, then clearly $\alpha$ is not a stamp folding. 
Otherwise if $p_2 p_3 \cdots p_n$ is a stamp folding but not a semi-meander, now suppose $p_t = n-1$ and $2 \leq t \leq n$. 
Observe that by the definition of semi-meander, $p_t$ is blocked by a perforation between some stamps and thus it cannot connect to $p_1 = n$ without crossing any stamp. 
Thus $\alpha$ is also not a stamp folding. 
\end{proof}

\begin{corollary} \label{cor:rotationstamp}
    If $\alpha = p_1 p_2 \cdots p_n$ is a semi-meander with $p_1 = n$, then $p_2 p_3 \cdots p_n p_1$ is also a semi-meander.
\end{corollary}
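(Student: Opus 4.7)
The plan is to proceed in two steps: first establish that $\beta = p_2 p_3 \cdots p_n p_1$ is a stamp folding, and then verify the semi-meander visibility condition for stamp $n$ in $\beta$.

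Since $\alpha$ is a semi-meander, it is in particular a stamp folding. The string $\beta = p_2 p_3 \cdots p_n p_1$ is obtained by a single string rotation of $\alpha$, so $\beta \in \mathrm{Rots}(\alpha)$, and Lemma~\ref{lem:rotation} immediately gives that $\beta$ is a stamp folding. Observe that since $p_1 = n$ in $\alpha$, stamp $n$ now sits in the rightmost (i.e., $n$-th) position of $\beta$.

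It remains to check that stamp $n$ is visible from above when $n$ is even, and from below when $n$ is odd. Recall from the paper's convention that the perforation between stamps $i$ and $i+1$ lies at the bottom when $i$ is odd and at the top when $i$ is even. In particular, the unique perforation incident to stamp $n$ is the one between $n-1$ and $n$, which lies at the bottom if $n$ is even and at the top if $n$ is odd. Consequently, when $n$ is even, every top perforation joins two stamps whose indices both lie in $\{1, 2, \ldots, n-1\}$, and those stamps occupy positions in $\{1, 2, \ldots, n-1\}$ because stamp $n$ is already at position $n$. Since each top arc spans only the positions between its two endpoints, no top arc can extend over position $n$, so stamp $n$ is visible from above. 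The argument for odd $n$ is symmetric and yields visibility from below, completing the verification.

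The main obstacle I expect is the geometric step asserting that a top arc between positions $a, b < n$ cannot overhang position $n$. This relies on the standard convention that, in a valid stamp folding, the top (respectively bottom) perforation arcs form a non-crossing matching drawn above (respectively below) the pile, so an arc between positions $a$ and $b$ covers exactly the positions strictly between $\min(a,b)$ and $\max(a,b)$. Once this is granted, the rest is a short case split on the parity of $n$.
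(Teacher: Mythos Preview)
Your proof is correct, but it takes a different route from the paper's. The paper does not invoke Lemma~\ref{lem:rotation} at all; instead it applies Lemma~\ref{lem:submeander} to conclude that $p_2 p_3 \cdots p_n$ is a semi-meander of order $n-1$, then argues that stamp $n-1$ is unobstructed and can therefore be connected to a new stamp $n$ placed at either extreme position, yielding the stamp folding $p_2 p_3 \cdots p_n p_1$. The final visibility step in the paper is the one-line observation that a stamp sitting at a boundary position cannot be covered by any perforation arc. Your argument reaches the stamp-folding conclusion in one stroke via Lemma~\ref{lem:rotation} and then spells out the boundary-visibility claim explicitly using the parity of the perforations and the non-crossing arc picture. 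Your route is arguably more self-contained and avoids the detour through the order-$(n-1)$ object; the paper's route, on the other hand, ties the corollary more tightly to Lemma~\ref{lem:submeander}, which is the structural workhorse used elsewhere in the recursion.
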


\begin{proof}
By Lemma~\ref{lem:submeander}, $\alpha$ is a stamp folding and thus $p_2 p_3 \cdots p_n$ is a semi-meander of order $n-1$. 
Thus stamp $n-1$ of $p_2 p_3 \cdots p_n$ is not blocked by any perforation between stamps and can connect to a stamp at position $1$ or at position $n$ to produce the string $\alpha$ or $p_2 p_3 \cdots p_n p_1$.
Lastly, stamps at position $1$ and position $n$ are at the boundary and cannot be blocked by any perforation between stamps, thus $p_2 p_3 \cdots p_n p_1$ is a semi-meander. 
\end{proof}

\begin{lemma} \label{lem:rotatemeander}
Suppose $\alpha = p_1 p_2 \cdots p_n$ is a semi-meander where $p_n \ne n$ and $\beta = \overleftarrow{\textrm{rotate}}_\alpha(n-j+1)$ with $j \geq 1$ being the smallest possible integer such that $\beta$ is a semi-meander, then
\begin{itemize}
\item $j=1$ if $p_n = 1$ and $n$ is even;
\item $j = n - I(p_n + 1, \alpha) + 1$ if $p_n$ and $n$ have the same parity;
\item $j = n - I(p_n - 1, \alpha) + 1$ if $p_n$ and $n$ have different parities.
\end{itemize}
\end{lemma}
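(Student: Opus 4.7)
My plan is to prove the lemma by analyzing the three cases separately, using the following common setup. By Lemma~\ref{lem:rotation}, $\beta = \overleftarrow{\textrm{rotate}}_\alpha(n-j+1) \in \mathrm{Rots}(\alpha)$ is automatically a stamp folding, so the only substantive question is whether stamp $n$ remains visible from the required side of the pile (above when $n$ is even, below when $n$ is odd). A standard parity argument on the folding pattern shows that, given the convention that the crease between stamps $1$ and $2$ sits at the bottom, the crease between stamps $k$ and $k+1$ sits at the top if $k$ is even and at the bottom if $k$ is odd---each interior stamp has exactly one perforation on each side of the pile. So the question reduces to checking which arcs on stamp $n$'s visible side strictly span over its linear position in $\beta$.

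For case 1 ($p_n = 1$, $n$ even), rotating stamp $1$ from position $n$ to position $1$ shifts every other stamp's position up by one. Every top arc connects stamps in $\{2,\ldots,n\}$ and is therefore simply shifted by $+1$ in $\beta$; stamp $n$'s position also shifts by $+1$, so non-obstruction in $\alpha$ transfers directly to non-obstruction in $\beta$. The only modified arc is the one between stamps $1$ and $2$, which sits at the bottom and is irrelevant to top-visibility.

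For cases 2 and 3, the chosen $\beta_1 \in \{p_n+1, p_n-1\}$ is precisely the strip-neighbour of $p_n$ whose incident crease lies on stamp $n$'s visible side: the crease between $p_n$ and $p_n+1$ when $p_n$ and $n$ share parity (case 2), and the crease between $p_n-1$ and $p_n$ when they differ (case 3). Since $\alpha$ is a semi-meander and $n \ne p_n \pm 1$ under our parity hypothesis, $I(n,\alpha) < I(p_n \pm 1, \alpha)$, so cutting at $I(p_n \pm 1, \alpha)$ turns the visible-side crease at $p_n$ into a short arc of span $[1, j]$ in $\beta$ and places stamp $n$ strictly past its right endpoint. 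By the non-crossing property of visible-side arcs in $\alpha$, every other visible-side arc is either nested inside the visible-side crease at $p_n$ or disjoint from it; after the cut each such arc either stays nested on the left part of $\beta$ or shifts intact onto the right part, and in neither configuration does it span strictly over stamp $n$'s new position. For minimality, any smaller $j' < j$ corresponds to a cut $i \in (I(p_n \pm 1, \alpha), n]$ that falls strictly between the two endpoints of the visible-side crease at $p_n$ in $\alpha$, so the cut splits that crease; it becomes a long arc in $\beta$ whose span strictly contains stamp $n$'s new position, obstructing visibility.

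The main obstacle is verifying rigorously the ``nested or disjoint'' dichotomy for the other visible-side arcs, and checking carefully that every such arc in $\beta$ really avoids stamp $n$'s new position. I expect this to be handled by writing down the explicit position formula $I(k,\beta) = ((I(k,\alpha) - (n-j+1)) \bmod n) + 1$ for every stamp and mechanically verifying span containment in $\beta$ for the two kinds of arcs (split and shifted) in each of the parity regimes.
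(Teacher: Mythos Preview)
Your approach is essentially the same as the paper's. Both proofs reduce the question to visibility of stamp $n$ (via Lemma~\ref{lem:rotation}), identify the crease between $p_n$ and its strip-neighbour on the visible side as the governing obstruction, use the semi-meander property of $\alpha$ to conclude $I(n,\alpha) < I(p_n\pm 1,\alpha)$, invoke non-crossing of same-side arcs to certify that the claimed $j$ works, and observe that any smaller $j$ leaves that crease straddling stamp~$n$. Your write-up is more explicit about the arc dichotomy and the position formula, while the paper's version is terser; no substantive difference in strategy.
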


\begin{proof}
If $p_n = 1$ and $n$ is even, then clearly $j = 1$ since $p_n$ only connects to $p_i = 2$ for some $i < n$ and the perforation between $p_n = 1$ and $p_i = 2$ is at the bottom while stamp $n$ is extending in the upward direction.  
Otherwise, if $p_n$ and $n$ have the same parity, then assume W.L.O.G. that  the perforation between $p_n$ and $p_n+1$ is at the bottom. 
Clearly, $p_t = n$ is also extending in the downward direction. 
Since $\alpha$ is a semi-meander, $p_t = n$ can be seen from below and thus $t < I(p_n + 1, \alpha) < n$.
Now consider the string $\beta = \overleftarrow{\textrm{rotate}}_\alpha(n-j+1)$. 
When $1 < j < I(p_n + 1, \alpha)$, the perforation between $p_n + 1$ and $p_n$ of $\beta$ is at the bottom and would block stamp $n$ making $\beta$ not a semi-meander, and thus $j \geq I(p_n + 1, \alpha)$.  
Furthermore, observe that there is no perforation between $p_i$ and $p_k$ at the bottom with $I(p_n + 1, \alpha)<i$ and $k<I(p_n + 1, \alpha)$ as otherwise the perforation intersects with the perforation between $p_n + 1$ and $p_n$. 
Thus, $\beta = \overleftarrow{\textrm{rotate}}_\alpha(n-j+1)$ is a semi-meander when $j = I(p_n + 1, \alpha)$. 
The proof is similar when $p_n$ and $n$ have different parities.
\end{proof}

\newpage
\begin{lemma} \label{lem:rotatemeander2}
Suppose $\alpha = p_1 p_2 \cdots p_n$ is a semi-meander where $p_1 \ne n$ and 
$\beta = \overrightarrow{\textrm{rotate}}_\alpha(j)$ with $j \geq 1$ being the smallest possible integer such that $\beta$ is a semi-meander,
then
\begin{itemize}
\item $j=1$ if $p_1 = 1$ and $n$ is even;
\item $j = I(p_1 + 1, \alpha)$ if $p_1$ and $n$ have the same parity;
\item $j = I(p_1 - 1, \alpha)$ if $p_1$ and $n$ have different parities.
\end{itemize}
\end{lemma}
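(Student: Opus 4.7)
The plan is to mirror the proof of Lemma~\ref{lem:rotatemeander} with the roles of the back and front of $\alpha$ interchanged, since $\overrightarrow{\textrm{rotate}}_\alpha(j)$ peels a prefix of length $j$ and appends it at the back, dually to what $\overleftarrow{\textrm{rotate}}_\alpha(n-j+1)$ does with a suffix. For the special case $p_1 = 1$ with $n$ even, I would argue exactly as in Lemma~\ref{lem:rotatemeander}: the only perforation incident to stamp $1$ is the bottom perforation between stamps $1$ and $2$, and since stamp $n$ extends upward (because $n$ is even), moving $p_1 = 1$ from the front to the back introduces no bottom arch that could obstruct stamp $n$, so $j = 1$ already yields a semi-meander.

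For the same-parity case I would proceed as follows. Assume without loss of generality that the perforation between $p_1$ and $p_1+1$ sits at the top (the odd/odd subcase is symmetric with ``bottom'' everywhere). Let $p_t = n$; we must have $t > I(p_1+1, \alpha)$, since otherwise the top arch between $p_1$ and $p_1+1$ would enclose stamp $n$ from above and contradict the semi-meander property of $\alpha$. For any $1 \le j < I(p_1+1, \alpha)$, a direct index computation shows that in $\beta = \overrightarrow{\textrm{rotate}}_\alpha(j)$ the image of $p_1+1$ sits at position $I(p_1+1,\alpha) - j$, stamp $n$ at position $t - j$, and the image of $p_1$ at position $n - j + 1$, so stamp $n$ remains strictly between $p_1+1$ and $p_1$ and the top arch still encloses it, forcing $\beta$ not to be a semi-meander. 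At $j = I(p_1+1, \alpha)$, the arch is relocated so that both $p_1+1$ and $p_1$ lie to the right of stamp $n$, and the non-interlocking argument from Lemma~\ref{lem:rotatemeander} ensures that no other top perforation of $\alpha$ can enclose stamp $n$ in $\beta$ either, so $\beta$ is a semi-meander.

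The different-parity case is entirely analogous with $p_1-1$ in place of $p_1+1$ and top/bottom perforations swapped, because the obstructing adjacent perforation now lies between $p_1-1$ and $p_1$ rather than between $p_1$ and $p_1+1$: the perforation incident to $p_1$ that shares its side with the direction in which stamp $n$ extends is now the ``lower-numbered'' one. The main obstacle I anticipate is justifying the non-interlocking step carefully at the new front of $\beta$: one needs to confirm that after the right rotation, no top (resp.\ bottom) perforation of $\alpha$ other than the one we explicitly relocated can cross the column containing stamp $n$ in $\beta$. I would handle this by invoking the fact that in a valid semi-meander no two perforations on the same side can cross, and that this property is preserved under prefix rotations, as they merely cyclically shift positions of stamps without altering the perforation pattern except for the single arch that moves from one end of the pile to the other.
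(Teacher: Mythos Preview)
Your proposal is correct and follows exactly the approach the paper takes: the paper's entire proof of this lemma is the single sentence ``The proof is similar to the one for Lemma~\ref{lem:rotatemeander},'' and what you have written is precisely the mirrored argument with front and back interchanged. In fact you supply more detail than the paper does, including the explicit index computation after the right rotation and the non-interlocking justification, all of which are the natural duals of the corresponding steps in Lemma~\ref{lem:rotatemeander}.
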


\begin{proof}
The proof is similar to the one for Lemma~\ref{lem:rotatemeander}. 
\end{proof}

In~\cite{LI2009296}, Li and Sawada developed a framework to generate Gray codes for reflectable languages. 
A language $L$ over an alphabet set $\sigma$ is said to be reflectable if for every $i > 1$ there exist two symbols $x_i$ and $y_i$ in $\sigma$ such that if $w_1 w_2 \cdots w_{i-1}$ is a prefix of a word in $L$, then both $w_1 w_2 \cdots w_{i-1}x_i$ and $w_1 w_2 \cdots w_{i-1} y_i$ are also prefixes of words in $L$. 
By reflecting the order of the children and using the special symbols $x_i$ and $y_i$ as the first and last children of each node at level $i - 1$, Li and Sawada devised a generic recursive algorithm to generate Gray codes for reflectable languages which include lots of combinatorial objects such as $k$-ary strings, restricted growth functions and $k$-ary trees. For example, the algorithm generates the binary reflected Gray code in Section~\ref{sec:intro} when we set $x_i =0$ and $y_i =1$ for every $i>1$.

\begin{figure}[t]
\begin{center}
\includegraphics[width=1\columnwidth]{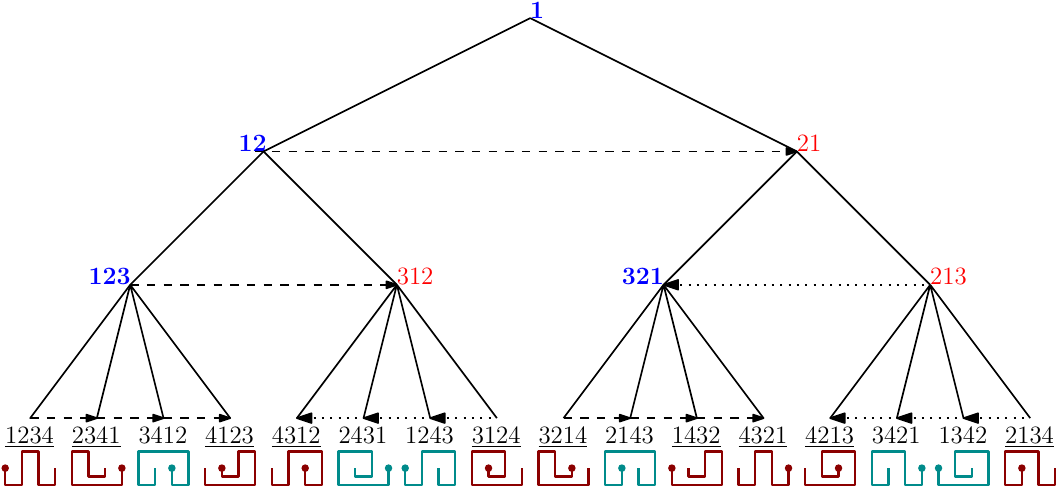}
\end{center}
\caption{
Recursive computation tree constructed by our algorithm {\sc GenR} that outputs stamp foldings and semi-meanders for $n = 4$
in cyclic rotation Gray code order. The nodes at the last level are generated by the PRINT function and
carry no sign. 
For nodes from level $0$ to level $n-2$, the nodes in bold (in blue color) carry a positive sign, and the rest of the nodes
(in red color) carry a negative sign.
A dashed arrow indicates applying right rotations to generate its neighbors, while a dotted arrow indicates applying left rotations to generate its neighbors.
The underlined stamp foldings at the last level of the recursive computation tree are semi-meanders. }
\label{fig:tree_recur}
\end{figure}

\begin{algorithm}[t] 
\caption{An algorithm that finds the number of simple left rotations or simple right rotations required to reach the next semi-meander.}\label{algo:find_steps}
\begin{algorithmic}[1]
\small
\Function{NextSemiMeander}{$p, r, d$}

    \If{$d = 1$} { $j \leftarrow {p_1}$}
    \Else { $j \leftarrow {p_r}$}
    \EndIf

    \If{$j = 1 $ {\bf and} $ r$ is even} {\Return $1$}
    \ElsIf{$j$ and $r$ have the same parity}
        \If{$d = 1$} {\Return $I(j + 1, {p})$}
        \Else { \Return $r - I(j + 1, {p}) + 1$}
        \EndIf
    \Else
        \If{$d = 1$} {\Return $I(j - 1, {p})$}
        \Else { \Return $r - I(j - 1, {p}) + 1$}
        \EndIf
    \EndIf
\EndFunction
\end{algorithmic}
\end{algorithm}
\begin{algorithm}[t] 
\caption{A recursive algorithm that generates rotation Gray codes for stamp foldings and semi-meanders.}\label{algo:recur_gen}
\begin{algorithmic}[1]
\small
\Procedure{GenR}{${p}, t$}
    \State {$i \leftarrow 1$}
    \State {$j \leftarrow 0$}

    \While {$i\leq t+1$}
        \If {$t \geq n-1$} {{\sc Print}$({p})$}
        \Else 
            \If {$q_{t+1} = 1$} {{\sc GenR}$(p\cdot(t+2), t+1)$}
            \Else { {\sc GenR}$((t+2)\cdot p, t+1)$}
            \EndIf
            \State{$q_{t+1} \leftarrow \neg q_{t+1}$}
        \EndIf 
        \If {$t \geq n-1$ {\bf and} generating stamp foldings} {$j \leftarrow 1$}
        \Else { $j \leftarrow ${\sc NextSemiMeander}$({p}, t+1, q_t)$}
        \EndIf
        \If {$q_t = 1$} {$p \leftarrow \overrightarrow{\textrm{rotate}}(j)$ {\color{blue}$\ \ \ \ \ \ \,  \triangleright \  $right rotation}}
        \Else { $p \leftarrow \overleftarrow{\textrm{rotate}}(n-j+1)${\color{blue}$\ \ \ \ \ \ \ \ \ \ \ \ \triangleright \  $left rotation}}
         \EndIf
        \State {$i \leftarrow i + j$}
    \EndWhile    
\EndProcedure
\end{algorithmic}
\end{algorithm}

Here we use a similar idea to recursively generate a Gray code for semi-meanders. 
The algorithm can be easily modified to generate stamp foldings. 
Depending on the level of a node in the recursive computation tree, there are two possibilities for its children at level $t-1$: 
\begin{itemize}
\item The root of the recursive computation tree is the stamp $1$ of order $1$; 
\item If a node $p_1 p_2 \cdots p_{t-1}$ is at level $t-2$ where $t-2 \geq 0$, then its children are the semi-meanders in $\mathrm{Rots}(t p_1 p_2 \cdots p_{t-1})$, which includes the strings $t p_1 p_2 \cdots p_{t-1}$ and $p_1 p_2 \cdots p_{t-1} t$ (by Corollary~\ref{cor:rotationstamp}).
\end{itemize}
To generate the Gray code for semi-meanders, 
we maintain an array $q_0 q_1 \cdots q_{n-1}$ which determines the sign of a new node at each level of the recursive computation tree. 
The current level of the recursive computation tree is given by the parameter $t$.
The sign of a new node at level $t$ is given by the parameter $q_t$, where $q_t = 1$ denotes the new node has a positive sign, and $q_t = 0$ denotes the new node has a negative sign. 
We also initialize the sign $q_t$ at each level as positive ($q_t = 1$) at the beginning of the algorithm.
The word being generated is stored in a doubly linked list $p = p_1 p_2 \cdots p_n$. 
Now if the current node has a positive sign, we generate the child $p_1 p_2 \cdots p_{t-1} t$ and then keep applying right rotations 
to generate all semi-meanders in $\mathrm{Rots}(p_1 p_2 \cdots p_{t-1} t)$ until it reaches the string $t p_1 p_2 \cdots p_{t-1} $.  
Then if the current node has a negative sign,  we generate the child $t p_1 p_2 \cdots p_{t-1} $ and then keep applying left rotations 
to generate all semi-meanders in $\mathrm{Rots}(t p_1 p_2 \cdots p_{t-1})$ until it reaches the string $p_1 p_2 \cdots p_{t-1} t$. 
Finally when we reach level $t = n-1$, we print out the semi-meanders that are in $\mathrm{Rots}(p_1 p_2 \cdots p_{n-1} n)$. 
The function $\textrm{NextSemiMeander}(p, r, d)$ determines the number of simple left rotations or simple right rotations required to find the next semi-meander of $p = p_1 p_2 \cdots p_r$, which is a direct implementation of Lemma~\ref{lem:rotatemeander} and Lemma~\ref{lem:rotatemeander2} (Algorithm~\ref{algo:find_steps}). 
We also complement $q_t$ every time a node is generated at level $t$.  
This way, at the start of each recursive call we can be sure that the previous word generated has stamp $t$ at the same position. 
Finally to generate stamp foldings, notice that removing stamp $n$ from a stamp folding always creates a semi-meander of order $n-1$ (Lemma~\ref{lem:submeander}). The recursive computation tree for stamp foldings is thus the same as the one that generates semi-meanders, except at level $n-1$ we print all stamp foldings that are in $\mathrm{Rots}(p_1 p_2 \cdots p_{n-1} n)$, that is all strings in $\mathrm{Rots}(p_1 p_2 \cdots p_{n-1} n)$. 
Pseudocode of the algorithm is shown in Algorithm~\ref{algo:recur_gen}. 
To run the algorithm, we make the initial call $\textrm{GenR}(1,0)$ which sets $p = 1$ and  $t = 0$.

As an example, the algorithm generates the following cyclic rotation Gray codes for stamp foldings and semi-meanders of length five respectively:
\begin{itemize}
\label{eqn:somelabel}
\item {
$\overrightarrow{\underline{1}{2345}}, \overrightarrow{\underline{2}{3451}},  
\overrightarrow{\underline{3}{4512}}, \overrightarrow{\underline{4}{5123}},
5\overrightarrow{\underline{1}{234}}, \overleftarrow{5234\underline{1}},  
\overleftarrow{1523\underline{4}}, \overleftarrow{4152\underline{3}}, 
\overleftarrow{3415\underline{2}}, \overrightarrow{\underline{23}{41}}5, 
\overrightarrow{\underline{4}{1235}}, \overrightarrow{\underline{1}{2354}},$ \\
$\overrightarrow{\underline{2}{3541}}, \overrightarrow{\underline{3}{5412}},  
54\overrightarrow{\underline{12}{3}}, \overleftarrow{5431\underline{2}},
\overleftarrow{2543\underline{1}}, \overleftarrow{1254\underline{3}},  
\overleftarrow{3125\underline{4}}, \overleftarrow{{4}\underline{312}}5, 
\overrightarrow{\underline{3}{1245}}, \overrightarrow{\underline{1}{2453}}, 
\overrightarrow{\underline{2}{4531}}, \overrightarrow{\underline{4}{5312}},$ \\
$53\overrightarrow{\underline{1}{2}}4, \overleftarrow{5321\underline{4}},  
\overleftarrow{4532\underline{1}}, \overleftarrow{1453\underline{2}},  
\overleftarrow{2145\underline{3}}, \overrightarrow{\underline{32}{14}}5,  
\overrightarrow{\underline1{4325}}, \overrightarrow{\underline{4}{3251}}, 
\overrightarrow{\underline{3}{2514}}, \overrightarrow{\underline{2}{5143}},  
5\overrightarrow{\underline{1}{432}}, \overleftarrow{5432\underline{1}},$ \\ 
$\overleftarrow{1543\underline{2}}, \overleftarrow{2154\underline{3}}, 
\overleftarrow{3215\underline{4}}, 4\overleftarrow{{3}\underline{21}}5, 
\overrightarrow{\underline{4}{2135}}, \overrightarrow{\underline{2}{1354}},  
\overrightarrow{\underline{1}{3542}}, \overrightarrow{\underline{3}{5421}},  
5\overleftarrow{{4}\underline{213}}, \overleftarrow{5213\underline{4}}, 
\overleftarrow{4521\underline{3}}, \overleftarrow{3452\underline{1}}, $\\ 
$\overleftarrow{1345\underline{2}},  \overleftarrow{2\underline{1}}345; $}

\item {
$\overrightarrow{\underline{12}{345}}, \overrightarrow{\underline{34}{512}}, 5\overrightarrow{\underline{1}{234}}, \overleftarrow{5\underline{2341}}, \overrightarrow{\underline{23}{41}}5, \overrightarrow{\underline{4123}{5}},  
54\overrightarrow{\underline{12}{3}}, \overleftarrow{543\underline{12}}, 
\overleftarrow{125\underline{43}}, \overleftarrow{4\underline{312}}5, 
\overrightarrow{\underline{3124}{5}}, 53\overrightarrow{\underline{1}{2}}4,$ \\ 
$\overleftarrow{5\underline{3214}}, \overrightarrow{\underline{32}{14}}5, 
\overrightarrow{\underline{1432}{5}}, 5\overrightarrow{\underline{1}{432}},  
\overleftarrow{543\underline{21}}, \overleftarrow{215\underline{43}},  
4\overleftarrow{3\underline{21}}5, \overrightarrow{\underline{4213}{5}},
5\overleftarrow{4\underline{213}}, \overleftarrow{521\underline{34}}, 
\overleftarrow{345\underline{21}}, \overleftarrow{2\underline{1}}345.$} 

\end{itemize}
The Gray code listing of semi-meanders can also be obtained by \emph{filtering} the Gray code listing of stamp foldings. 
For more about filtering Gray codes, see~\cite{10.1007/978-3-030-85088-3_15, SAWADA2022138}.  
Figure~\ref{fig:tree_recur} illustrates the recursive computation tree when $n = 4$.

\section{Analyzing the recursive algorithm}
\label{sec:proves}
In this section, we prove that our recursive algorithm generates cyclic rotation Gray codes for stamp foldings and semi-meanders in constant amortized time and $O(n)$-amortized time per string respectively. 

We first prove the Gray code property for semi-meanders. 
The proof for the Gray code property for stamp foldings follows from the one for semi-meanders.

\begin{lemma} \label{lem:meander-gray-semi}
Each consecutive semi-meanders in the listing generated by the algorithm {\sc GenR} differ by a stamp rotation ${\textrm{rotate}}_\alpha(i, j, k)$ for some $k < i \leq j \leq n$. 
\end{lemma}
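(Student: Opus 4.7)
The plan is to induct on $n-t$ for calls to {\sc GenR}$(p,t)$, augmenting the inductive hypothesis with an auxiliary invariant about where stamp $t{+}2$ sits in the first and last printed semi-meanders of a sub-call.

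\emph{Base case} ($t=n{-}1$). Each pass of the while loop prints $p$ and then updates it to either $\overrightarrow{\textrm{rotate}}_p(j)={\textrm{rotate}}_p(j{+}1,n,1)$ or $\overleftarrow{\textrm{rotate}}_p(n{-}j{+}1)={\textrm{rotate}}_p(n{-}j{+}1,n,1)$, both of which are already of the required form ${\textrm{rotate}}_\alpha(i,j,k)$ with $k<i\leq j\leq n$.

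\emph{Inductive step} ($t<n{-}1$). Consecutive prints lying inside the same sub-call {\sc GenR}$(\cdot,t{+}1)$ differ by a stamp rotation by the inductive hypothesis, so only the inter-iteration transitions at level $t$ remain. For these I would carry the auxiliary invariant: the sub-call {\sc GenR}$(s,t{+}1)$ entered with $q_{t+1}=1$ prints a first semi-meander in which stamp $t{+}2$ occupies the final position and a last semi-meander in which stamp $t{+}2$ occupies position $1$; with $q_{t+1}=0$ the two positions are swapped. This invariant is forced by Lemmas~\ref{lem:rotatemeander} and~\ref{lem:rotatemeander2}, which dictate the simple rotations produced by NextSemiMeander, together with the sign-flipping scheme that is the core of the reflectable-language Gray code framework of Li and Sawada.

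With this invariant, consider the transition when $q_{t+1}$ flips from $1$ to $0$. The final print of iteration $i$ has stamp $t{+}2$ at position $1$; the first print of iteration $(i{+}1)$ is $(t{+}2)\cdot p'$, where $p'$ results from a simple rotation of $p$ computed by NextSemiMeander at level $t$, so it also places stamp $t{+}2$ at position $1$. After removing this common leading symbol, the two strings differ only by the simple rotation applied at level $t$, and this corresponds to a single stamp rotation ${\textrm{rotate}}_\alpha(i,j,k)$ of the full string with $k\geq 2$. The symmetric case $q_{t+1}:0\to 1$ (with stamp $t{+}2$ pinned at the final position) is handled analogously.

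\emph{Main obstacle.} The heart of the proof is verifying the auxiliary invariant. Because stamp $t{+}2$ is itself shuffled around by the deeper NextSemiMeander rotations at levels $t{+}2,\ldots,n{-}1$, one must show that the cumulative effect of these rotations always returns stamp $t{+}2$ to the claimed boundary position by the end of each sub-call. I expect this to require a companion induction that synchronizes the sign-flipping mechanism across all deeper levels before the outer induction step can be completed.
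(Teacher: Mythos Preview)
Your auxiliary invariant is false, and this is a genuine gap rather than merely a ``main obstacle'' to be filled in later. Take $n=5$ and the sub-call {\sc GenR}$(1234,3)$, so $t{+}1=3$ and the stamp in question is $t{+}2=4$, entered with $q_3=1$. Its first printed semi-meander is $12345$, where stamp~$4$ sits at position~$4$; its last printed semi-meander is $54123$, where stamp~$4$ sits at position~$2$. Neither is a boundary position. The reason is exactly what you flag but underestimate: the recursive calls at levels $t{+}2,\ldots,n{-}1$ rotate progressively longer strings, and these rotations freely move stamp $t{+}2$ into the interior. Your transition argument also conflates the length-$(t{+}2)$ string handed to {\sc GenR}$(\cdot,t{+}1)$ with the length-$n$ string that is actually printed; the sentence ``the first print of iteration $(i{+}1)$ is $(t{+}2)\cdot p'$'' is literally about a string of the wrong length.

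The paper sidesteps all of this by inducting on $n$ rather than on $n-t$. The key structural observation is that levels $0$ through $n{-}2$ of the computation tree for order~$n$ coincide with the \emph{entire} computation tree for order~$n{-}1$, so the inductive hypothesis already says that consecutive level-$(n{-}2)$ nodes (viewed as length-$(n{-}1)$ semi-meanders) differ by a stamp rotation. Only at level $n{-}1$ is a boundary invariant needed, and there it is immediate: level $n{-}1$ is the print level, so the first and last children of a level-$(n{-}2)$ node $\alpha$ are simply $\alpha\cdot n$ and $n\cdot\alpha$ (or vice versa), with no deeper recursion to disturb the position of stamp~$n$. Consecutive prints across sub-trees are therefore $n\cdot\alpha$, $n\cdot\beta$ or $\alpha\cdot n$, $\beta\cdot n$, which differ by the same stamp rotation that relates $\alpha$ and $\beta$. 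If you want to rescue an induction on depth, the correct invariant concerns \emph{all} of stamps $t{+}2,\ldots,n$ simultaneously (they occupy identical positions across the transition), not stamp $t{+}2$ alone --- but establishing that is essentially re-deriving the paper's induction on $n$.
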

\begin{proof}
The proof is by induction on $n$.
In the base case when $n = 2$, the generated semi-meanders are $12$ and $21$ and clearly they differ by a stamp rotation. 
Inductively, assume consecutive semi-meanders generated by the algorithm {\sc GenR} differ by a stamp rotation when $n = k-1$. 
Consider the case when $n = k$, clearly the nodes in levels $0$ to $k-2$ of the recursive computation tree for $n = k$ are exactly the same as the recursive computation tree for $n = k-1$. 
If two consecutive nodes at level $k-1$ of the recursive computation tree have the same parent at level $k-2$, then clearly the semi-meanders differ by a left rotation or a right rotation. 
Otherwise if two consecutive nodes at level $k-1$ of the recursive computation tree have different parents $\alpha$ and $\beta$ at level $k-2$.
Observe that by the algorithm  {\sc GenR}, $\alpha$ and $\beta$ are consecutive nodes at level $k-2$ of the recursive computation tree and W.L.O.G. assume $\alpha$ comes before $\beta$. 
Then by the inductive hypothesis, the corresponding semi-meanders for $\alpha$ and $\beta$ differ by a stamp rotation. 
Also, since $\alpha$ and $\beta$ are consecutive nodes at level $k-2$ of the recursive computation tree, their signs are different by the algorithm and thus the two children can only be the last child of $\alpha$ and the first child of $\beta$. 
If $\alpha$ carries a positive sign, then the two children are $n \cdot \alpha$ and $n \cdot \beta$. Otherwise, the two children are $\alpha \cdot n$ and $\beta \cdot n$. 
In both cases, the two children differ by a stamp rotation. 
\end{proof}

\begin{lemma} \label{lem:meander-circular}
The first and last strings generated by the algorithm {\sc GenR} are $123 \cdots n$ and $2134 \cdots n$ respectively.
\end{lemma}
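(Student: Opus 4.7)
My plan is to handle the two halves of the lemma separately: a short induction handles the first string, while a more delicate trace of the recursion's ``last branch'' handles the last string.

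For the first string, observe that $q_0 q_1 \cdots q_{n-1}$ is initialized to all ones, so at every level the very first iteration of the while loop observes $q_{t+1}=1$ and descends through the positive branch $\textrm{GenR}(p\cdot(t+2),\,t+1)$. A short induction on recursion depth starting from $\textrm{GenR}(1,0)$ shows that after $t$ nested first-iteration calls the live argument is $p = 12\cdots(t+1)$; in particular, the first time the recursion reaches the leaf level $n-1$ the argument is $p = 12\cdots n$, and the leaf's first action is to print $p$. Hence the first string produced is $12\cdots n$.

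For the last string, let $\sigma_k$ denote the length-$k$ string $2,1,3,4,\ldots,k$. I would prove by induction on $t$, for $2 \le t \le n-1$, the following invariant: the last recursive call dispatched from level $t-1$ is $\textrm{GenR}((t+1)\cdot\sigma_t,\,t)$, and the sign $q_t$ read at the top of that call equals $0$. The base case $t=2$ is a direct check inside $\textrm{GenR}(12,1)$: $\textrm{NextSemiMeander}(12,2,1)=1$ forces two iterations, the intermediate right rotation takes $p$ from $12$ to $21$, and the second (last) iteration calls $\textrm{GenR}(3\cdot21,\,2)=\textrm{GenR}(321,2)$ with $q_2=0$. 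For the inductive step I would enter $\textrm{GenR}((t+1)\cdot\sigma_t,\,t)$ with $q_t=0$; the negative sign forces left-rotation updates of $p$ whose step sizes come from $\textrm{NextSemiMeander}$. Applying Lemma~\ref{lem:rotatemeander} to the concrete string $(t+1)\cdot\sigma_t$, I would show by direct case analysis on the index lookups $I(\cdot,\cdot)$ that $p$ evolves through a short sequence of intermediate semi-meanders and reaches $\sigma_{t+1}$ at the last iteration, at which point $p_{t+1}=t+1$ triggers the degenerate case of $\textrm{NextSemiMeander}$ and pushes the loop index $i$ past $t+1$; a parity bookkeeping for the global sign $q_{t+1}$ relative to the number of iterations yields $q_{t+1}=0$ at that final call, which is $\textrm{GenR}((t+2)\cdot\sigma_{t+1},\,t+1)$, closing the induction. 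Instantiating at $t=n-1$ shows that the very last leaf call is $\textrm{GenR}(n\cdot\sigma_{n-1},\,n-1)$ with $q_{n-1}=0$: in stamp-folding mode the leaf uses $j=1$ and prints the $n$ successive left rotations of $p$, the $(n-1)$-st of which yields $\sigma_{n-1}\cdot n = 2134\cdots n$ as the very last printed string, and an analogous argument handles the semi-meander case.

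The principal obstacle is making the inductive step precise, since the pattern of $\textrm{NextSemiMeander}$ steps inside $\textrm{GenR}((t+1)\cdot\sigma_t,\,t)$ depends on the parity of $t$ and the specific structure of the string $(t+1)\cdot\sigma_t$. Concretely, one must compute the sequence of $I(\cdot,\cdot)$ lookups that drives $p$ toward $\sigma_{t+1}$, determine the exact number of iterations of the while loop at each level, and verify that this number has the correct parity relative to the incoming value of $q_{t+1}$ so that the sign at the final recursive call is indeed $0$.
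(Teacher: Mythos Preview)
Your treatment of the first string is fine and matches the paper. The problem is the last-string half: the ``parity bookkeeping for the global sign $q_{t+1}$'' is not a local computation, and your inductive invariant does not carry enough information to perform it. The value of $q_{t+1}$ at the moment the final child is spawned equals $1$ flipped once for every node at level $t$ that has already been processed---across \emph{all} earlier calls $\textrm{GenR}(\cdot,t)$, not just the last one. Your induction hypothesis only pins down the last such call, so it says nothing about how many iterations occurred in the earlier ones. Concretely, the number of iterations inside the single call $\textrm{GenR}((t{+}1)\cdot\sigma_t,\,t)$ need not be even (for $t=4$ it equals $3$: the semi-meanders in $\mathrm{Rots}(52134)$ are $52134,\,34521,\,21345$), so no amount of case analysis on that one call will force $q_{t+1}=0$ at its last iteration.

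What makes the argument go through is a global parity fact that you never state: the total number of semi-meanders (and of stamp foldings) of every order $m\ge 2$ is even, because string reversal $p_1p_2\cdots p_m \mapsto p_m\cdots p_2p_1$ is a fixed-point-free involution on these sets. Since $q_{t+1}$ is complemented exactly once per node at level $t$, and the number of such nodes is the number of semi-meanders of order $t{+}1$, the evenness forces $q_{t+1}=0$ at the very last node of level $t$. Once you have this, the explicit trace through $\textrm{NextSemiMeander}$ and Lemma~\ref{lem:rotatemeander} is unnecessary: the negative sign at every level immediately gives the rightmost branch $21,\,213,\,2134,\ldots,\sigma_n$, which is exactly the paper's argument. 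Your direct-trace plan is not wrong in spirit, but it cannot be completed without this missing global observation---and with it, the plan collapses to the short proof the paper gives.
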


\begin{proof}
The leftmost branch of the recursive computation tree corresponds to the first string generated by the algorithm. 
Observe that all nodes of the leftmost branch carry positive signs and are the first child of their parents. 
Thus the first string generated by the algorithm is $123 \cdots n$. 
Similarly, the rightmost branch of the recursive computation tree corresponds to the last string generated by the algorithm.
The rightmost node at level one of the recursive computation tree corresponds to the string $21$. 
Furthermore, observe that the number of semi-meanders and the number of stamp foldings are even numbers when $n > 1$ (for each stamp folding (semi-meander) $p_1 p_2 \cdots p_n$, its reversal $p_n p_{n-1} \cdots p_1$ is also a stamp folding (semi-meander)). 
Thus the rightmost nodes at each level $t>0$ carry negative signs and are the last child of their parents. 
Therefore, the last string generated by the algorithm is $2134 \cdots n$. 
\end{proof}

\begin{lemma} \label{lem:meander-exhau}
The algorithm {\sc GenR} exhaustively generates all semi-meanders of length $n$.
\end{lemma}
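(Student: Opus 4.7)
The plan is to prove Lemma~\ref{lem:meander-exhau} by induction on $n$, mirroring the recursive structure of {\sc GenR}. The base cases $n = 1$ and $n = 2$ are verified by a direct trace. For the inductive step, I first establish the invariant that every invocation {\sc GenR}$(p, t)$ has $p$ a semi-meander of length $t+1$: this holds for the initial call $(1,0)$; any recursive descent passes $p\cdot (t+2)$ or $(t+2)\cdot p$, which by Lemma~\ref{lem:submeander} together with Corollary~\ref{cor:rotationstamp} is a semi-meander of length $t+2$; and every update to $p$ inside the while loop is performed using {\sc NextSemiMeander}, which by Lemmas~\ref{lem:rotatemeander} and \ref{lem:rotatemeander2} lands on the next semi-meander in the rotation class. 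Consequently, {\sc GenR} prints only semi-meanders of length $n$.

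For the reverse inclusion, fix an arbitrary semi-meander $\beta = q_1 q_2 \cdots q_n$ of length $n$ and let $k$ be the unique position with $q_k = n$. The cyclic rotation placing $n$ first, namely $n \cdot q_{k+1} \cdots q_n q_1 \cdots q_{k-1}$, is a stamp folding by Lemma~\ref{lem:rotation}, and therefore by Lemma~\ref{lem:submeander} the suffix $\sigma = q_{k+1} \cdots q_n q_1 \cdots q_{k-1}$ is a semi-meander of length $n-1$. By the inductive hypothesis, $\sigma$ appears as a node at level $n-2$ of the recursive computation tree of {\sc GenR}, and by construction $\beta \in \mathrm{Rots}(n \cdot \sigma)$. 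It therefore suffices to show that the while loop executed when $\sigma$ is processed at level $n-2$ visits every semi-meander in $\mathrm{Rots}(n \cdot \sigma)$.

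The main obstacle is this last step. Depending on the sign $q_{n-1}$, the loop begins at either $\sigma \cdot n$ or $n \cdot \sigma$ and repeatedly invokes {\sc NextSemiMeander}. Lemmas~\ref{lem:rotatemeander} and \ref{lem:rotatemeander2} guarantee that each returned value $j$ equals exactly the number of simple rotations needed to reach the next semi-meander in the rotation class, so no semi-meander in $\mathrm{Rots}(n \cdot \sigma)$ is skipped between consecutive iterations. The counter $i$ accumulates these jumps, and since $|\mathrm{Rots}(n \cdot \sigma)| = n$, the termination condition $i \leq n$ ensures that the loop traverses exactly one full cycle through the rotation class before halting. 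Hence every semi-meander in $\mathrm{Rots}(n \cdot \sigma)$, including $\beta$, is printed, completing the induction. A brief companion argument, observing that distinct parents $\sigma$ give disjoint rotation classes $\mathrm{Rots}(n \cdot \sigma)$ (because $n$ appears at a unique position in any such string), shows that no semi-meander is printed twice.
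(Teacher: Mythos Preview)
Your proof is correct and follows the same inductive approach as the paper: the base case and the use of Lemma~\ref{lem:submeander} to reduce an arbitrary semi-meander $\beta$ to a semi-meander $\sigma$ of order $n-1$ at level $n-2$ (via the inductive hypothesis) match the paper's argument exactly. You supply more detail than the paper---an explicit soundness invariant that every call {\sc GenR}$(p,t)$ has $p$ a semi-meander, a careful argument via Lemmas~\ref{lem:rotatemeander} and~\ref{lem:rotatemeander2} that the while loop at the leaf level visits every semi-meander in $\mathrm{Rots}(n\cdot\sigma)$ without skipping or overshooting, and a no-duplicates observation---all of which the paper's shorter proof leaves implicit.
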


\begin{proof}
The proof is by induction on $n$. 
In the base case when $n=2$, the generated semi-meanders are $12$ and $21$ which are all the possible semi-meanders for $n=2$. 
Inductively, assume the algorithm generates all semi-meanders for $n =k-1$. 
Consider the case when $n = k$, by Lemma~\ref{lem:submeander} clearly the algorithm generates either one of the semi-meanders of the form $\{k p_1 p_2 \cdots p_{k-1}, p_1 p_2 \cdots p_{k-1} k\}$ for all possible $p_1 p_2 \cdots p_{k-1}$ when considering the first child produced by each node at level $k-2$. 
Since each semi-meander has a symbol $p_t = k$, generating all valid string rotations of the first child  exhaustively lists out all semi-meanders for $n = k$. 
\end{proof}

Together, Lemma~\ref{lem:meander-gray-semi}, Lemma~\ref{lem:meander-circular} and Lemma~\ref{lem:meander-exhau} prove the following theorem.
\begin{theorem} \label{thm:meander-main}
The algorithm {\sc GenR} produces a list of all semi-meanders of order $n$ in a cyclic rotation Gray code order.
\end{theorem}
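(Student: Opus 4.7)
The plan is to synthesize the three preceding lemmas and add one small wrap-around verification. First I would invoke Lemma~\ref{lem:meander-exhau} to conclude that every semi-meander of order $n$ appears in the listing produced by {\sc GenR}. Uniqueness of each appearance is already implicit in the tree structure: each semi-meander $\alpha$ of order $n$ arises as exactly one child at level $n-1$, namely as a member of $\mathrm{Rots}(n\cdot p_1 p_2 \cdots p_{n-1})$ where $p_1 p_2 \cdots p_{n-1}$ is the unique semi-meander of order $n-1$ obtained from $\alpha$ by Lemma~\ref{lem:submeander} after rotating $n$ to the front and dropping it. Then Lemma~\ref{lem:meander-gray-semi} supplies the closeness condition between every pair of consecutive entries, establishing the rotation Gray code property for all internal transitions.

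The only remaining obligation is the cyclic clause: that the first and last strings also differ by a stamp rotation. Lemma~\ref{lem:meander-circular} pins down these endpoints as $\alpha = 123\cdots n$ and $\beta = 2134\cdots n$, which differ only in positions $1$ and $2$. One then observes that $\beta = {\textrm{rotate}}_\alpha(2,2,1)$, since the definition of stamp rotation introduced in Section~\ref{sec:algorithm} permits $i = j$ provided $k < i \leq j \leq n$, and here the triple $(i,j,k) = (2,2,1)$ satisfies $1 < 2 \leq 2 \leq n$ for all $n \geq 2$.

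I expect no substantive obstacle: the theorem is essentially a bookkeeping combination of the three lemmas. The one subtlety worth flagging is that the wrap-around transition collapses a general stamp rotation down to a single adjacent transposition at the first two positions; one must therefore be explicit that such a transposition is formally a legitimate instance of ${\textrm{rotate}}_\alpha(i,j,k)$, so that the cyclic property holds in the strict sense used throughout the paper rather than merely in an informal sense.
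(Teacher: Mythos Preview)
Your proposal is correct and follows essentially the same approach as the paper, which simply states that Lemmas~\ref{lem:meander-gray-semi}, \ref{lem:meander-circular}, and \ref{lem:meander-exhau} together prove the theorem. You are in fact more careful than the paper: your explicit verification that the wrap-around transition $123\cdots n \leftrightarrow 2134\cdots n$ is a bona fide stamp rotation ${\textrm{rotate}}_\alpha(2,2,1)$, and your remark on uniqueness of appearance, fill in details the paper leaves implicit.
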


We then prove the Gray code property for stamp foldings.

\begin{lemma} \label{lem:meander-gray-stamp}
Each consecutive stamp foldings in the listing generated by the algorithm {\sc GenR} differ by a stamp rotation ${\textrm{rotate}}_\alpha(i, j, k)$ for some $k < i \leq j \leq n$. 
\end{lemma}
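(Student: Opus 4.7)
The plan is to adapt the proof of Lemma~\ref{lem:meander-gray-semi} by exploiting the fact that the recursive computation tree built by {\sc GenR} for stamp foldings of order $n$ coincides with the tree for semi-meanders of order $n$ through levels $0$ to $n-1$; the two algorithms differ only in what is printed at the bottom level. At level $n-1$ in the stamp folding branch, all $n$ string rotations of $p_1 p_2 \cdots p_{n-1} n$ are printed---each a valid stamp folding by Lemma~\ref{lem:rotation}---obtained one after another by simple rotations, since the algorithm hardcodes $j=1$ there.

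First I would observe that the sequence of nodes visited at level $n-2$ of the tree is exactly the cyclic rotation Gray code for semi-meanders of order $n-1$ produced by invoking {\sc GenR} with parameter $n-1$. Hence, by Theorem~\ref{thm:meander-main} applied at order $n-1$, any two consecutive level-$(n-2)$ nodes $\alpha$ and $\beta$ differ by a stamp rotation.

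The main case analysis then parallels Lemma~\ref{lem:meander-gray-semi}. For two consecutive printed stamp foldings $\gamma,\gamma'$, either they share a parent at level $n-2$, in which case the hardcoded $j=1$ means $\gamma'$ is obtained from $\gamma$ by a simple left or simple right rotation (which is a stamp rotation), or they descend from distinct consecutive parents $\alpha, \beta$ at level $n-2$. In the latter situation, the sign-flipping convention $q_{n-1} \leftarrow \neg q_{n-1}$ of {\sc GenR} forces $\alpha$ and $\beta$ to carry opposite signs, so $\alpha$ exits its level-$(n-1)$ loop and $\beta$ enters its level-$(n-1)$ loop with stamp $n$ in the same extremal position: either both of the form $n\cdot\alpha, n\cdot\beta$, or both of the form $\alpha\cdot n, \beta\cdot n$. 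The stamp rotation carrying $\alpha$ to $\beta$ therefore lifts directly to a stamp rotation carrying $\gamma$ to $\gamma'$.

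The main obstacle I anticipate is the bookkeeping in the second case: verifying that the sign convention really does pin stamp $n$ to a common position in the last printed string associated with $\alpha$ and the first printed string associated with $\beta$. This requires a careful trace of how $q_{n-1}$ toggles across consecutive recursive calls and of which end of $p$ the new stamp is inserted at (the $q_{t+1}=1$ versus $q_{t+1}=0$ branches of {\sc GenR}), together with the direction of rotation used inside the level-$(n-1)$ loop. Once that bookkeeping is resolved, the remainder is a direct transcription of the semi-meander argument, so no further induction on $n$ is needed.
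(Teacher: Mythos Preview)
Your proposal is correct and follows essentially the same route as the paper: both arguments observe that levels $0$ through $n-2$ of the computation tree for stamp foldings of order $n$ coincide with the tree for semi-meanders of order $n-1$, invoke the semi-meander Gray code result (the paper cites Lemma~\ref{lem:meander-gray-semi} directly, you cite Theorem~\ref{thm:meander-main} at order $n-1$, which amounts to the same thing), and then rerun the inductive-step case analysis of Lemma~\ref{lem:meander-gray-semi} at level $n-1$. One small phrasing slip: the trees do \emph{not} coincide through level $n-1$ (only through level $n-2$); at level $n-1$ the stamp-folding tree has all $n$ rotations while the semi-meander tree has fewer---but your subsequent argument makes clear you understand this, so it is cosmetic rather than substantive.
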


\begin{proof}
Since levels $0$ to $n - 2$ of the recursive computation tree
for stamp foldings of length $n$ are exactly the same as the recursive computation tree for semi-meanders of length $n-1$, by Lemma~\ref{lem:meander-gray-semi} the strings correspond to consecutive nodes at level $n-2$ of the recursive computation tree differ by a stamp rotation. 
Therefore by applying the same argument as the inductive step of Lemma~\ref{lem:meander-gray-semi}, consecutive stamp foldings at level $n-1$ of the recursive computation tree also differ by a stamp rotation. 
\end{proof}

\begin{lemma} \label{lem:stamps-exhau}
The algorithm {\sc GenR} exhaustively generates all stamp foldings of length $n$.
\end{lemma}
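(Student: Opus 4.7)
The plan is to reduce the stamp-folding exhaustiveness claim to the semi-meander case (Theorem~\ref{thm:meander-main}) by exploiting the bijection between stamp foldings of order $n$ and semi-meanders of order $n-1$ induced by Lemma~\ref{lem:submeander} and Lemma~\ref{lem:rotation}. First, I would observe that the computation tree {\sc GenR} builds for stamp foldings of length $n$ agrees with the tree it builds for semi-meanders through level $n-2$; the only difference is what happens at the leaves at level $n-1$. Thus if I view the last non-leaf level as terminal, Theorem~\ref{thm:meander-main} applied at order $n-1$ tells me that the strings $p_1 p_2 \cdots p_{n-1}$ appearing as the ``current strings'' when level $n-2$ is entered range over exactly the semi-meanders of order $n-1$, each occurring once.

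Next, I would analyze what the algorithm prints at level $n-1$. In the stamp-folding branch the while loop sets $j \leftarrow 1$ after each {\sc Print}, so the loop walks through exactly $n$ successive simple rotations (left or right, depending on the sign $q_{n-1}$) of the string $p_1 p_2 \cdots p_{n-1} n$ (or $n\cdot p_1 \cdots p_{n-1}$) and prints each one. These $n$ outputs are precisely the elements of $\mathrm{Rots}(n \cdot p_1 p_2 \cdots p_{n-1})$. By Lemma~\ref{lem:submeander}, $n \cdot p_1 p_2 \cdots p_{n-1}$ is a stamp folding of order $n$, and by Lemma~\ref{lem:rotation} every element of $\mathrm{Rots}(n \cdot p_1 p_2 \cdots p_{n-1})$ is also a stamp folding. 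Hence every printed string is valid.

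For completeness and no duplicates, I would argue as follows. Every stamp folding $\sigma$ of order $n$ contains the symbol $n$ in a unique position, so $\sigma$ has a unique string rotation of the form $n \cdot \tau$; applying Lemma~\ref{lem:submeander} to this rotation shows $\tau$ is a semi-meander of order $n-1$. This defines a map from stamp foldings of order $n$ to semi-meanders of order $n-1$ whose fibres are exactly the string-rotation classes $\mathrm{Rots}(n \cdot \tau)$, each of size $n$. Since the semi-meander parents at level $n-2$ range over \emph{every} semi-meander of order $n-1$ exactly once, and the level-$(n-1)$ loop enumerates every element of the corresponding class $\mathrm{Rots}(n\cdot \tau)$ exactly once, every stamp folding is produced exactly once.

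The step I expect to require the most care is verifying that the level-$(n-1)$ loop indeed cycles through all $n$ rotations (rather than fewer) and does not miss or repeat any; this is where the constant $j=1$ assignment in the stamp-folding branch and the fact that $|\mathrm{Rots}(\alpha)|=n$ must be invoked explicitly. Everything else is either a direct appeal to Theorem~\ref{thm:meander-main}, to Lemma~\ref{lem:submeander}, or to Lemma~\ref{lem:rotation}, and the induction on $n$ can be suppressed because the semi-meander result already packages the inductive content.
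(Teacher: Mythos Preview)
Your proposal is correct and follows essentially the same approach as the paper: identify levels $0$ through $n-2$ of the computation tree with the semi-meander tree of order $n-1$, invoke the semi-meander exhaustiveness result (the paper cites Lemma~\ref{lem:meander-exhau} directly rather than Theorem~\ref{thm:meander-main}), and then use Lemma~\ref{lem:submeander} together with the rotation-class partition (Lemma~\ref{lem:rotation}) to conclude that the $n$ simple rotations printed at level $n-1$ cover every stamp folding. Your write-up is slightly more detailed in making the bijection and the no-duplicates claim explicit, but the underlying argument is the same.
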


\begin{proof}
Since levels $0$ to $n - 2$ of the recursive computation tree for stamp foldings of length $n$ are exactly the same as the recursive computation tree for semi-meanders of length $n-1$, by Lemma~\ref{lem:meander-exhau} the algorithm exhaustively generates all nodes correspond to semi-meanders of length $n-1$. 
Thus by Lemma~\ref{lem:submeander}, the algorithm generates either one of the stamp foldings of the form $\{n p_1 p_2 \cdots p_{n-1}, p_1 p_2 \cdots p_{n-1} n\}$ for all possible length $n-1$ semi-meanders $p_1 p_2 \cdots p_{n-1}$  when considering the first child produced by each node at level $n-2$.  
Since the set of stamp foldings is partitioned into equivalence classes under string rotation, the algorithm generates all string rotations of the first child and thus exhaustively lists out all stamp foldings.
\end{proof}

Similarly, Lemma~\ref{lem:meander-gray-stamp}, Lemma~\ref{lem:meander-circular} and Lemma~\ref{lem:stamps-exhau} prove the following theorem.
\begin{theorem} \label{thm:stamp-main}
The algorithm {\sc GenR} produces a list of all stamp foldings of order $n$ in a cyclic rotation Gray code order.
\end{theorem}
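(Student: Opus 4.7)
The plan is to assemble the three lemmas just proved, exactly as the authors set up. First I would invoke Lemma~\ref{lem:stamps-exhau} to conclude that every stamp folding of order $n$ appears in the output of {\sc GenR}, so the listing is complete. Next I would apply Lemma~\ref{lem:meander-gray-stamp} to conclude that every pair of consecutive entries in the output differs by a stamp rotation $\textrm{rotate}_\alpha(i,j,k)$ with $k<i\le j\le n$. Together these give that the listing is a rotation Gray code.

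The only fresh obligation beyond what the two lemmas directly provide is \emph{cyclicity}: I need the last and first outputs to differ by a stamp rotation as well. Here I would invoke Lemma~\ref{lem:meander-circular}, which pins these two strings down as $123\cdots n$ (first) and $2134\cdots n$ (last). A short direct check then finishes the job: taking $\alpha=123\cdots n$ and applying $\textrm{rotate}_\alpha(2,2,1)$ yields $2134\cdots n$, and the indices $k=1<i=j=2\le n$ fit the definition of stamp rotation from Section~\ref{sec:algorithm}. Hence the wrap-around is itself a stamp rotation, and the cyclic rotation Gray code property follows.

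The main obstacle I anticipate is purely a bookkeeping one, namely making sure the cyclic closure satisfies the index constraint $k<i\le j\le n$ that is baked into the definition of stamp rotation; the computation above settles this. No new inductive or structural argument is required, because Lemma~\ref{lem:meander-gray-stamp} already handles consecutive pairs generated at all interior transitions (both within a parent's children and across adjacent parents), Lemma~\ref{lem:stamps-exhau} handles completeness, and Lemma~\ref{lem:meander-circular} explicitly identifies the endpoints so that cyclicity can be verified in one line.
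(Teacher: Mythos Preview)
Your proposal is correct and follows exactly the paper's own approach: the paper simply states that Lemma~\ref{lem:meander-gray-stamp}, Lemma~\ref{lem:meander-circular}, and Lemma~\ref{lem:stamps-exhau} together prove the theorem. Your explicit verification that $\textrm{rotate}_{123\cdots n}(2,2,1)=2134\cdots n$ closes the cycle is a welcome bit of detail the paper leaves implicit.
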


Finally, we prove the time complexity of our algorithm. 

\begin{theorem} \label{thm:semi-time}
The algorithm {\sc GenR} generates semi-meanders and stamp foldings of length $n$ in cyclic rotation Gray code order in $O(n)$-amortized time and constant amortized time per string respectively, using $O(n)$ space.
\end{theorem}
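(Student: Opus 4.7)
The plan decomposes into a space bound, the cost of one iteration of the while loop, and a charging argument over the recursion tree.

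For space, I would maintain $p$ as a circular doubly linked list augmented with a length-$n$ pointer table mapping each value in $\{1,\ldots,n\}$ directly to its node, so that the target elements identified in Lemmas~\ref{lem:rotatemeander} and~\ref{lem:rotatemeander2} can be reached in $O(1)$. Insertions at recursive calls only append a single node and do not invalidate the table, and rotations merely shift the head pointer. Together with the recursion stack (depth at most $n$, with $O(1)$ local variables per frame) and the sign array $q_0\cdots q_{n-1}$, this uses $O(n)$ memory in total.

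For the per-iteration cost inside {\sc GenR}, I account for the recursive call and {\sc Print} separately; the remaining work is a bit toggle, a call to {\sc NextSemiMeander}, and the rotation. Even with the pointer table, computing the index $j = I(\cdot, p)$ inside {\sc NextSemiMeander} requires walking $j$ nodes of the list, and carrying out the rotation by $j$ positions costs another $\Theta(j)$ pointer updates, so one iteration costs $\Theta(j)$. At any node at level $t$, the while loop sweeps once around the cyclic rotation class of length $t+1$, so the $j$-values telescope and the node does $O(t+1) = O(n)$ total work (excluding its recursive descendants).

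I then charge across the recursion tree. The level-$t$ nodes correspond bijectively with semi-meanders of order $t+1$, of which there are $s(t+1)$, and I claim the geometric lower bound $s(n) \geq 2\,s(n-1)$: for every semi-meander $\beta$ of order $n-1$, both $n\beta$ (a semi-meander by Lemma~\ref{lem:submeander}, since stamp $n$ sits at the boundary and cannot be obstructed) and $\beta n$ (by Corollary~\ref{cor:rotationstamp}) are semi-meanders of order $n$, and they are distinct because their first stamps differ. Hence $\sum_{k=1}^{n-1} s(k) = O(s(n-1))$, and the total work on internal levels is $O\bigl(\sum_{t=0}^{n-2} (t+1)\,s(t+1)\bigr) = O(n\,s(n-1))$. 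The last level contributes another $O(n\,s(n-1))$ by the same telescoping (for stamp foldings $j=1$ and there are $n$ iterations of $O(1)$ each; for semi-meanders the $j$-values again sum to $n$).

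Combining these, the total work is $O(n\,s(n-1))$. For stamp foldings, Lemma~\ref{lem:rotation} together with Lemma~\ref{lem:submeander} gives $f(n) = n\,s(n-1)$, so the amortized cost is $O(1)$ per string, which is the CAT claim. For semi-meanders, $s(n-1) \leq s(n)/2$ yields amortized $O(n\,s(n-1)/s(n)) = O(n)$ per string. The main obstacle is the geometric-growth claim $s(n) \geq 2\,s(n-1)$; without it the internal-level contribution could swamp the output count, and the two-extension injection sketched above is what resolves it cleanly.
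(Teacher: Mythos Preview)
Your proof is correct and matches the paper's approach: both hinge on the branching-factor-at-least-two observation (your inequality $s(n)\ge 2\,s(n-1)$ is precisely the paper's ``each call to {\sc GenR} makes at least two recursive calls'' via Corollary~\ref{cor:rotationstamp}) and then charge the $O(n)$ per-node work against the leaves, with the identity $f(n)=n\,s(n-1)$ (the paper phrases it as ``each node at level $n-2$ has exactly $n$ children'') yielding the CAT bound for stamp foldings. Your version is more explicit about the telescoping of the $j$-values within a node and about the space accounting, but the skeleton of the argument is the same.
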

\begin{proof}
Clearly for each node at level $t<n-1$, each recursive call of the algorithm {\sc GenR} only requires $O(n)$ amount of work and a linear amount of space to generate all rotations of a semi-meander. 
By Corollary~\ref{cor:rotationstamp}, since each call to {\sc GenR} makes at least two recursive calls and there are no dead ends in the computation tree, the algorithm generates each node at level $n-2$ of the computation tree in $O(n)$-amortized time per node using a linear amount of space. 
If we are generating semi-meanders, by the same argument each node at level $n-1$ requires $O(n)$ amount of work and thus each string can be generated in $O(n)$-amortized 
time per string using a linear amount of space. 
Otherwise if we are generating stamp foldings, by Lemma~\ref{lem:rotation} each node at level $n-2$ of the computation tree has exactly $n$ children, while as discussed above each node at level $n-2$ can be generated in $O(n)$-amortized time per node. 
Therefore, the algorithm generates stamp foldings in constant amortized time per string using a linear amount of space.
\end{proof}

\section{An iterative algorithm to generate Gray codes for semi-meanders
and stamp foldings}
\label{sec:iterative}
In this section, we present a simple iterative algorithm for generating the same cyclic rotation Gray codes for stamp foldings and semi-meanders as the recursive algorithm described in Section~\ref{sec:algorithm}. 
\begin{algorithm}[t] 
\small
\caption{An iterative algorithm that generates rotation Gray codes for stamp
foldings and semi-meanders.}\label{algo:iter_meander}
\begin{algorithmic}[1]
\Procedure{GenI}{$n$}
    
    \State {$p_1 p_2 \cdots p_{n} \gets 1 \cdot 2 \cdots n$}
    \State {$q_0 q_1 \cdots q_{n-1} \leftarrow 1^n$}

     \Do
        \State {{\sc Print}$({p})$}
        \State {$i \gets 1$}
        \For {$t$ from $n-1$ to $0$}
            \If{$q_t = 1$ and $p_i = t+1$}
                \State{$q_t \gets \neg q_t$}
                \State{$i \gets i + 1$}
            \ElsIf {$q_t = 0$ and $p_{i+t} = t+1$}
                \State {$q_t \gets \neg q_t$}
            \Else { \textbf{break}}
            \EndIf
        \EndFor
        

        \If{$t = n-1$}
            \If {generating stamp foldings} 
            \State {$j \gets 1$}
            \For {$i$ from $0$ to $n-2$}
            \If {$q_t = 1$} {$p \leftarrow \overrightarrow{\textrm{rotate}}(1)$}
            \Else { $p \leftarrow \overleftarrow{\textrm{rotate}}(n)$}
            \EndIf
            \State {{\sc Print}$({p})$}
            \EndFor
        \Else { $j \gets ${\sc NextSemiMeander}$(p, t+1, q_t)$}
                \EndIf

            \If {$q_t = 1$} {$p \leftarrow \overrightarrow{\textrm{rotate}}(j)$}
            \Else { $p \leftarrow \overleftarrow{\textrm{rotate}}(n - j + 1)$}
            \EndIf
        \ElsIf{$t>0$}
        \State {$j \gets ${\sc NextSemiMeander}$(p_i p_{i+1} \cdots p_{i+t}, t+1, q_t)$}
        \If{$q_t$ = 1} {$p \gets p_1 p_2 \cdots p_{i-1} \cdot \overrightarrow{\textrm{rotate}}_{p_i p_{i+1} \cdots p_{i+t}}(j) \cdot p_{i+t+1} p_{i+t+2} \cdots p_n$}
        \Else { $p \gets p_1 p_2 \cdots p_{i-1} \cdot \overleftarrow{\textrm{rotate}}_{p_i p_{i+1} \cdots p_{i+t}}(n - j + 1) \cdot p_{i+t+1} p_{i+t+2} \cdots p_n$}
        \EndIf
        \EndIf
    \doWhile{$t > 0$}
\EndProcedure
\end{algorithmic}
\end{algorithm}

The iterative algorithm for generating cyclic rotation Gray codes for stamp foldings and semi-meanders operates by maintaining a binary array $q = q_0 q_1 \ldots q_{n-1}$ of length $n$, where each element $q_t$ represents the rotation direction at level $t$: 1 for right rotation and 0 for left rotation. 
Initially, $q$ is set to all ones, indicating right rotations at each level. 
The first string, $p$, is initialized as the sequence $p = 1\cdot2 \cdots n$. 
Starting at the highest level $t = n-1$, the iterative algorithm performs rotations according to the direction indicated by $q_t$ until reaching the maximum position, determined by the leftmost character being $t+1$ (when $q_t = 1$) or the rightmost character being $t+1$ (when $q_t = 0$). 
If $t = n-1$, the next stamp folding can be found by a simple left rotation or a simple right rotation, depending on $q_t$. 
Otherwise, the next semi-meander can be found using the function $\textrm{NextSemiMeander}(p, t+1, q_t)$. 
If a rotation occurs at the maximum position at level $t$, $t$ is decremented to find the next level $k$ that is not rotated at the maximum position, and the corresponding rotation according to $q_k$ is performed. 
The positions in $q$ from $q_{k+1}$ up to $q_{n-1}$ are then reversed, and the process repeats to find the largest possible level $t$ again for rotation. 
This continues until $t = 0$, at which point the algorithm stops. 
We also maintain a variable $i$ that keeps track of the substring $p_i p_{i+1} \cdots p_{i+t}$ of $p$ that corresponds to the semi-meander of level $t$. 
This is based on Corollary~\ref{cor:rotationstamp}, which implies that the largest character in a semi-meander must occupy either the first or last position when it reaches its maximum rotation.
By following this iterative approach, the algorithm generates the same cyclic rotation Gray codes for stamp foldings and semi-meanders as the recursive algorithm.
Pseudocode of the iterative algorithm is given in Algorithm~\ref{algo:iter_meander}.
To run the algorithm, we make the initial call $\textrm{GenI}(n)$. 

As an example, the algorithm generates the same cyclic rotation Gray codes
for stamp foldings and semi-meanders as {\sc GenR} as shown in Section~\ref{sec:algorithm}. 

\begin{theorem} \label{thm:meander-iterative-main}
The algorithm {\sc GenI} produces the same cyclic rotation Gray codes generated by the algorithm {\sc GenR} for semi-meanders and stamp foldings of order $n$.
\end{theorem}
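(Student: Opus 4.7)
The plan is to prove the theorem by induction on the number of strings already generated, maintaining the invariant that whenever {\sc GenI} is about to execute its {\sc Print} statement, the pair $(p, q_0 q_1 \cdots q_{n-1})$ (together with the index $i$ and the level $t$ implicit in the for-loop's exit point) matches the state of {\sc GenR} at the analogous {\sc Print} call. Since both algorithms initialize $p = 1 \cdot 2 \cdots n$ and $q = 1^n$ and both print this string first, the base case is immediate, so the entire argument reduces to showing that one iteration of the do-while loop of {\sc GenI} reproduces exactly what {\sc GenR} does between two consecutive {\sc Print}s.

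The core observation is that {\sc GenR}'s recursive trace can be read off directly from the sign array: at any moment strictly between prints, there is a unique level $t$ at which the current while-loop has more iterations to run, while at every higher level $t' > t$ the while-loop has just completed its last iteration. By Corollary~\ref{cor:rotationstamp} and the way {\sc GenR} constructs its children ($p \cdot (t+2)$ when $q_{t+1}=1$, and $(t+2) \cdot p$ when $q_{t+1}=0$), the active substring of length $t'+1$ at level $t'$ consists of precisely the characters $\{1,2,\ldots,t'+1\}$ occupying a contiguous block of $p$, and the exhaustion of level $t'$ is characterised by the symbol $t'+1$ sitting at the left end of that block when $q_{t'}=1$ or at the right end when $q_{t'}=0$. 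This is exactly the test $p_i = t+1$ versus $p_{i+t}=t+1$ in the downward for-loop of {\sc GenI}, and the complementation $q_t \gets \neg q_t$ inside that loop mirrors {\sc GenR}'s sign flip after each child is generated.

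Next, I would verify that the variable $i$ in {\sc GenI} tracks the starting position of this contiguous block correctly during the unwinding. When $q_{t'}=1$ and the block's leftmost character is $t'+1$, descending to level $t'-1$ removes $t'+1$ from the front of the block, so the smaller block begins at $i+1$ (matching the update $i \gets i+1$); when $q_{t'}=0$, the removed symbol is at the right end and the smaller block still begins at $i$. Once the loop exits at the largest $t$ where the maximality test fails, the rotation that {\sc GenI} performs matches the rotation {\sc GenR} performs at the re-entered level: for $0 < t < n-1$ and for semi-meander generation at $t=n-1$, both invoke {\sc NextSemiMeander} with identical arguments $(p_i \cdots p_{i+t}, t+1, q_t)$ and apply the same simple rotation on the same substring, by direct appeal to Lemma~\ref{lem:rotatemeander} and Lemma~\ref{lem:rotatemeander2}; for stamp-folding generation at $t=n-1$, {\sc GenI}'s explicit cycle of $n-1$ simple rotations together with the final rotation reproduces exactly the $n$ {\sc Print} calls that {\sc GenR}'s while-loop issues at the leaf level (using Lemma~\ref{lem:rotation}).

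The main obstacle is the careful verification of the contiguity invariant and the associated index bookkeeping: one must argue by a secondary induction on the level at which a rotation occurs that every rotation performed by {\sc GenI} preserves the property ``the characters $\{1,\ldots,t'+1\}$ form a contiguous block starting at $i$'' for every level $t'$ that is not currently being modified. Neither the preservation of contiguity nor the correctness of the $i$ update is entirely automatic, but both follow once it is observed that a rotation inside a block of size $t+1$ permutes only the block's internal contents, and a rotation at a strictly higher level shifts the entire block by a fixed offset. With the invariant maintained, the do-while's $t > 0$ termination condition coincides with {\sc GenR}'s return from the root call, and the two listings are identical.
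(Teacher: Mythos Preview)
Your proposal is correct and follows essentially the same approach as the paper's own proof, namely matching the state $(p,q)$ of {\sc GenI} to that of {\sc GenR} at each print and arguing that {\sc GenI}'s descending for-loop simulates the unwinding of {\sc GenR}'s recursion, with the sign flips $q_t \gets \neg q_t$ implementing the reflection. Your treatment is considerably more detailed than the paper's brief sketch---in particular, the explicit contiguity invariant for the blocks $\{1,\ldots,t'+1\}$ and the bookkeeping for the index $i$ are not spelled out there---but the underlying idea is the same.
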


\begin{proof}
Both the {\sc GenI} and {\sc GenR} algorithms begin with the same initial string $p = 1\cdot2 \cdots n$. 
In the recursive computation tree of the algorithm {\sc GenR}, each level initially rotates in the right direction, which corresponds to the binary array $q = q_0 q_1 \cdots q_{n-1} = 1^n$. 
The recursive algorithm {\sc GenR} then starts rotating the stamp folding or semi-meander from the maximum depth of the recursive computation tree. 
This corresponds to the process of the iterative algorithm {\sc GenI} to find the maximum level $t$ that has not reached its maximum rotation.

To handle the reflection process in the recursive algorithm {\sc GenR}, the direction of $q_{k+1}$ to $q_{n-1}$ is reversed once a maximal level $k$ is found that has not reached its maximal rotation. 
As such, the iterative algorithm {\sc GenI} generates an equivalent listing to that of the recursive algorithm {\sc GenR}.
\end{proof}
We would also like to note that this iterative approach can be extended to the generation of reflectable Gray codes as discussed in~\cite{LI2009296} by using a similar methodology.

\label{sec:iterative-prove}

Finally, we prove the time complexity of our iterative algorithm.

\begin{theorem} \label{thm:semi-time-interative}
The algorithm {\sc GenI} generates semi-meanders and stamp foldings of length $n$ in cyclic rotation Gray code order in $O(n)$ time and constant amortized time per string respectively, using $O(n)$ space.
\end{theorem}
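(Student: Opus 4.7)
The plan is to bound the work done by \textsc{GenI} in one iteration of its outer do-while loop and then, for stamp foldings, amortize the rare expensive iterations against the many cheap ones by appealing to the same tree structure exploited in the proof of Theorem~\ref{thm:semi-time}.

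First, I would analyse the cost of one iteration. The inner for loop scanning from $t = n-1$ downwards examines at most $n$ levels with $O(1)$ work each, so the scan costs $O(n)$. A call to \textsc{NextSemiMeander} runs in $O(n)$ by the implementation in Algorithm~\ref{algo:find_steps}, which performs a single pass to locate $p_1 \pm 1$ or $p_n \pm 1$ via the index function $I(\cdot,\cdot)$. A substring rotation by $j$ positions on the doubly linked list representing $p$ is executed in $O(j) = O(n)$ time by traversing $j$ links and relinking a constant number of pointers, whereas the simple whole-list rotations $\overrightarrow{\textrm{rotate}}(1)$ and $\overleftarrow{\textrm{rotate}}(n)$ cost only $O(1)$. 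Hence one outer iteration costs $O(n)$ in the worst case; since every iteration emits exactly one semi-meander via the top-of-loop \textsc{Print}, the $O(n)$-time bound per semi-meander follows at once.

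The stamp-folding bound requires an amortized argument. Call an iteration a \emph{bulk} iteration if the scan breaks at $t = n-1$, and a \emph{boundary} iteration otherwise. A bulk iteration triggers the dedicated inner for loop, which performs $n-1$ simple whole-list rotations together with their \textsc{Print}s and one concluding rotation; because each simple rotation on the linked list is $O(1)$, the non-printing work is $O(n)$ while the iteration emits $n$ stamp foldings, giving $O(1)$ amortized per string. A boundary iteration costs $O(n)$ but emits only one string, so it remains to bound their number. Let $S_n$ denote the number of stamp foldings of order $n$. By Theorem~\ref{thm:meander-iterative-main}, \textsc{GenI} reproduces the listing of \textsc{GenR}, so each group of $n$ consecutive outputs of a bulk iteration is exactly the set of $n$ children at level $n-1$ of some level-$(n-2)$ node in the computation tree of \textsc{GenR}; by Lemma~\ref{lem:rotation} there are therefore $S_n/n$ such nodes. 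Boundary iterations are in bijection with transitions between consecutive level-$(n-2)$ nodes in the Gray-code traversal, hence with non-root nodes of the subtree above level $n-2$; because Corollary~\ref{cor:rotationstamp} guarantees that every non-leaf of that upper tree has at least two children, it contains $O(S_n/n)$ nodes. Summing, the total work is $O(n)\cdot S_n/n + O(n)\cdot O(S_n/n) = O(S_n)$, which is constant amortized per stamp folding. The $O(n)$ space bound is immediate from the sizes of $p$, $q$, and the loop indices.

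The main obstacle will be making precise the bijection between boundary iterations of \textsc{GenI} and non-root nodes of the upper subtree of the computation tree of \textsc{GenR}. Theorem~\ref{thm:meander-iterative-main} supplies the required correspondence between the two listings, so once that linkage is unpacked the counting argument transfers essentially verbatim from the analysis of \textsc{GenR} in Theorem~\ref{thm:semi-time}.
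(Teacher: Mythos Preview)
Your semi-meander analysis coincides with the paper's: bound one do-while iteration by $O(n)$ and note that each iteration emits one string.

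For stamp foldings your route is correct but genuinely different from the paper's. You argue \emph{globally}: identify boundary iterations with transitions between consecutive level-$(n-2)$ nodes in the computation tree of \textsc{GenR} (via Theorem~\ref{thm:meander-iterative-main}), bound their number by $O(S_n/n)$, and sum. The paper instead argues \emph{locally} and self-containedly, without ever appealing to the tree or to Theorem~\ref{thm:meander-iterative-main}: it observes that \textsc{NextSemiMeander} is invoked only when $t<n-1$, i.e.\ exactly when stamp $n$ has reached its extreme position; the scan that detected this has just complemented $q_{n-1}$, so in the very next do-while iteration the scan must break at $t=n-1$, triggering the bulk loop that outputs $n-1$ stamp foldings with $O(1)$-cost simple rotations. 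Hence every $O(n)$-cost boundary iteration is immediately followed by a bulk iteration emitting $\Theta(n)$ strings, and the amortization is automatic. Your global count achieves the same bound but at the price of importing the recursive analysis; the paper's pairing argument is shorter and stays entirely inside \textsc{GenI}.

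One small wrinkle in your write-up: boundary iterations are indeed in bijection with transitions between consecutive level-$(n-2)$ nodes (of which there are $S_n/n-1$), but those transitions are \emph{not} in bijection with non-root nodes of the upper subtree (there can be up to $2S_n/n-2$ of those). You only need the transition count, so the detour through non-root nodes and Corollary~\ref{cor:rotationstamp} is superfluous and the ``hence'' should be dropped.
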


\begin{proof}
Each left rotation $\overleftarrow{\textrm{rotate}}_\alpha(i)$ and right rotation $\overrightarrow{\textrm{rotate}}_\alpha(j)$ can be completed in constant time by using a doubly linked list to maintain each stamp folding or semi-meander. 
Also, the function $\textrm{NextSemiMeander}(p, t+1, q_t)$  can be called to find the next semi-meander under left rotation or right rotation, which takes $O(n)$ time. 
Moreover, the operations required to maintain and update the binary array $q$ can be easily accomplished in $O(n)$ time. It is thus easy to see that the algorithm {\sc GenI}  generates semi-meanders in $O(n)$ time per string, using $O(n)$ space.

Finally for stamp folding, observe that the function $\textrm{NextSemiMeander}(p, t+1, q_t)$ is called only when $t<n-1$, which is also the case when $n$ reaches the leftmost position when $q_{n-1} = 1$ or the rightmost position when $q_{n-1} = 0$. 
The for loop from line 7 to 13 in Algorithm~\ref{algo:iter_meander} would then complement $q_{n-1}$ in the next do-while iteration, which causes $t = n-1$ and generates $n-1$ stamp foldings in the for loop from line 17 to 20 in Algorithm~\ref{algo:iter_meander}. 
As such, the runtime for the algorithm {\sc GenI} to generate each stamp folding is amortized to constant amortized time per string.
\end{proof}

It is worth pointing out that {\sc GenI} algorithm offers the generation of semi-meanders in $O(n)$ time per string, instead of $O(n)$-amortized time per string by {\sc GenR}. 
This can be considered as a slight advantage when comparing the two algorithms.

\section{Final remarks}
\label{sec:remarks}
This paper presents a recursive algorithm and an iterative algorithm for generating rotation Gray codes for stamp foldings and semi-meanders. 
Both algorithms generate stamp foldings and semi-meanders in constant amortized time per string and $O(n)$-amortized time per string respectively, using a linear amount of space.

An \emph{open meander} is a stamp folding with the restriction that stamp 1 can be seen from above, and stamp $n$ can be seen
from above when $n$ is even and stamp $n$ can be seen from below when $n$ is odd.
For example, the permutations that correspond to the six open meanders in Figure~\ref{fig:stamp}  for $n=4$ are given as follows: 
\begin{center}
1234, 1432, 2341, 3214, 4123, 4321. 
\end{center}
The problem of finding Gray codes for open meanders is listed as an open problem in the paper by Sawada and Li~\cite{DBLP:journals/combinatorics/SawadaL12}, and also in the Gray code survey by M\"{u}tze~\cite{mutze2023combinatorial}. 
However, even using a similar strategy mentioned in this paper, we were unable to produce a Gray code for open meanders.  
We leave this as an open problem. 

\section*{Acknowledgements}
\noindent
This research is supported by the Macao Polytechnic University research grant (Project code: RP/FCA-02/2022) and the National Research Foundation of Korea (NRF) grant funded by the Ministry of Science and ICT (MSIT), Korea (No. 2020R1F1A1A01070666).

\newpage
\bibliographystyle{abbrv}
\bibliography{myrefs}       

\clearpage

\noindent
\large
{\bf Appendix A: C code of recursive algorithm to generate stamp foldings and semi-meanders in cyclic rotation Gray code order}

\tiny
\lstset{style=mystyle}
\begin{lstlisting}[language=C]
#include <stdio.h>
#include <stdlib.h>
#define MAX 100

int n, q[MAX], total = 0, type;

struct linked_list {
    int element;
    struct linked_list *prev;
    struct linked_list *next;
};

//--------------------------------------------------
struct linked_list* Rotate(struct linked_list *p, int k, int d) 
{
    if (d)
        for (int i = 0; i < k%n; i++) p = p->prev;
    else
        for (int i = 0; i < k%n; i++) p = p->next;
    
    return p;
}

//--------------------------------------------------
int IndexOf(int e, struct linked_list *p, int r, int t) {
    if (r) {
        for (int i = 1; i <= n; i++) {
            if (p->element == e) return i;
            p = p->next;
        }
    }
    else {
        p = p->prev;
        for (int i = 1; i <= n; i++) {
            if (p->element == e) return t-i+1;
            p = p->prev;
        }
    }
    return 1;
}

//--------------------------------------------------
int NextSemiMeander(struct linked_list *p, int r, int d) {
    int j;

    if (d) j = p->element;
    else j = p->prev->element;

    if (j == 1 && !(r%2)) return 1;

    else if (j%2 == r%2) {
        if (d) return IndexOf(j+1, p, d, r);
        else return r - IndexOf(j+1, p, d, r) + 1;
    }
    else {
        if (d) return IndexOf(j-1, p, d, r);
        else return r - IndexOf(j-1, p, d, r) + 1;
    }
}

//--------------------------------------------------
void Print(struct linked_list *p) {
    for (int i = 0; i < n; i++) {
        printf("%d ", p->element);
        p = p->next;
    }
    printf("\n");   total += 1;
}

//--------------------------------------------------
void GenR(struct linked_list *p, int t) {
    int i = 1, j = 0;
    
    while (i <= t+1) {
        if (t >= n-1) Print(p);
        else {
            struct linked_list *r = NULL;
            r = (struct linked_list*)
            malloc(sizeof(struct linked_list));
            r->element = t+2;
            r->next = p;    r->prev = p->prev;
            r->next->prev = r;  r->prev->next = r;

            if (q[t+1]) GenR(p, t+1);
            else GenR(r, t+1);
            
            q[t+1] = !q[t+1];
            p->prev = r->prev;  p->prev->next = p;
        }
        
        if (type == 1 && t >= n-1) j = 1; 
        else j = NextSemiMeander(p, t+1, q[t]);
        p = Rotate(p, j, !q[t]);
        i = i + j;
    }
}

//--------------------------------------------------
int main() {
    printf("ENTER n: "); scanf("%d", &n);
    printf("Enter selection # (1. Stamp foldings, 2. Semi-meanders): ");
    scanf("%d", &type);

    for (int i = 0; i <= n; i++) q[i] = 1;
    
    struct linked_list *p = NULL;
    p = (struct linked_list*)malloc(sizeof(struct linked_list));
    p->element = 1;
    p->prev = p;    p->next = p;

    GenR(p, 0);
    printf("\nTotal: %d\n", total);
}
\end{lstlisting}

\clearpage
\noindent
\large
{\bf Appendix B: C code of iterative algorithm to generate stamp foldings and semi-meanders in cyclic rotation Gray code order}

\tiny
\lstset{style=mystyle}
\begin{lstlisting}[language=C]
#include <stdio.h>
#include <stdlib.h>
#define MAX 100

int n, q[MAX], type;
long total = 0;

struct linked_list {
    int element;
    struct linked_list *prev;
    struct linked_list *next;
};

struct linked_list *Rotate(struct linked_list *p, int k, int d) 
{
    if (d) for (int i = 0; i < k%n; i++) p = p->next;
    else for (int i = 0; i < k%n; i++) p = p->prev;
    return p;
}

//--------------------------------------------------
int IndexOf(int e, struct linked_list *p, int r, int t) {
    if (r) {
        for (int i = 1; i <= n; i++) {
            if (p->element == e) return i;
            p = p->next;
        }
    }
    else {
        p = p->prev;
        for (int i = 1; i <= n; i++) {
            if (p->element == e) return t-i+1;
            p = p->prev;
        }
    }
    return 1;
}

//--------------------------------------------------
int NextSemiMeander(struct linked_list *p, int r, int d) {
    int j;

    if (d) j = p->element;
    else j = p->prev->element;

    if (j == 1 && !(r%2)) return 1;

    else if (j%2 == r%2) {
        if (d) return IndexOf(j+1, p, d, r);
        else return r - IndexOf(j+1, p, d, r) + 1;
    }
    else {
        if (d) return IndexOf(j-1, p, d, r);
        else return r - IndexOf(j-1, p, d, r) + 1;
    }
}

//--------------------------------------------------
void Print(struct linked_list *p) {
    for (int i = 0; i < n; i++) {
        printf("%d ", p->element);
        p = p->next;
    }
    printf("\n");
    total += 1;
}

//--------------------------------------------------
void GenI(struct linked_list* p) {
    int t, j;
    struct linked_list *head, *tail, *beforeHead, *afterTail;

    do {
        Print(p);
        head = p;   tail = p->prev;
        
        for (t = n-1; t >= 0; t--) {
            if (q[t] && head->element == t+1) {
                q[t] = 1 - q[t];
                head = head->next; tail = tail->next;
            }
            else if (!q[t] && tail->element == t+1) 
                q[t] = 1 - q[t];
            else break;
            tail = tail->prev;
        }

        if (t == n-1) {
            if (type == 1) {
                j = 1;
                // produce n stamp foldings if t != n-1
                for (int i = 0; i < n-2; i++) { 
                    p = Rotate(p, 1, q[t]);
                    Print(p);
                    head = p;   tail = p->prev;
                }
            }
            else j = NextSemiMeander(p, t+1, q[t]);
            p = Rotate(p, j, q[t]);
        }
        else if (t > 0) {
            beforeHead = head->prev; afterTail = tail->next;
            head->prev = tail; tail->next = head;
            j = NextSemiMeander(head, t+1, q[t]);
            if (p == head) {
                head = Rotate(head, j, q[t]);
                p = head;
            }
            else head = Rotate(head, j, q[t]);
            beforeHead->next = head; 
            afterTail->prev = head->prev;
            beforeHead->next->prev = beforeHead; 
            afterTail->prev->next = afterTail;
        }
    } while (t > 0);
}

//--------------------------------------------------
int main() {
    printf("ENTER n: ");
    scanf("%d", &n);
    printf("Enter selection # (1. Stamp foldings, 2. Semi-meanders): ");
    scanf("%d", &type);

    for (int i = 0; i <= n; i++) q[i] = 1;

    struct linked_list *p = NULL;
    p =(struct linked_list *)malloc(sizeof(struct linked_list));
    p->element = 1;

    struct linked_list *temp = p;
    for (int i = 2; i <= n; i++) {
        temp->next = (struct linked_list *) 
        malloc(sizeof(struct linked_list));
        temp->next->element = i;
        temp->next->prev = temp;
        temp = temp->next;
    }
    p->prev = temp; temp->next = p;

    GenI(p);
    printf("\nTotal: %ld \n", total);
}
\end{lstlisting}

\end{document}